\newtheorem{theorem}{Theorem}[section]
\title{Indefinite Causal Structure and Causal Inequalities with Time-Symmetry}
\author{
    Luke Mrini and Lucien Hardy \\
    \normalsize \textit{Perimeter Institute for Theoretical Physics,} \\ \
    \normalsize \textit{31 Caroline Street North,} \\ 
    \normalsize \textit{Waterloo, Ontario N2L 2Y5, Canada}
}
\date{}
\begin{document}

\maketitle

\begin{abstract}
    Time\textendash reversal symmetry is a prevalent feature of microscopic physics, including operational quantum theory and classical general relativity. Previous works have studied indefinite causal structure using the language of operational quantum theory, however, these rely on time-asymmetric conditions to constrain both operations and the process matrix. Here, we use time-symmetric, operational probabilistic theory to develop a time-symmetric process matrix formalism for indefinite causal structure. This framework allows for more processes than previously considered and a larger set of causal inequalities. We demonstrate that this larger set of causal inequalities offers new opportunities for device\textendash independent certification of causal non-separability by violating new inequalities. Additionally, we determined that the larger class of time-symmetric processes found here is equivalent to those with Indefinite Causal Order and Time Direction (ICOTD) considered by Chiribella and Liu \cite{liu2024tsirelson}, thereby providing a description of these processes in terms of process matrices.
\end{abstract}

\tableofcontents

\section{Introduction}
\label{intro}

The notion of causal structure---the relations between events that characterize them as timelike, spacelike, or lightlike separated---is of a central importance in information theory and in fundamental physics. The distinction between past and future, or lack thereof, plays a foundational role in quantum theory, classical general relativity, information theory, and physics more broadly. We use the term \emph{time-symmetry} to describe a theory which is invariant under some form of time-reversal, whether it be the $\mathcal{CPT}$-invariance of the Standard Model, or any combination of $\mathcal{T}$ with other discrete transformations. In this work, we develop a formalism to explicate the full range of causal structures that are possible in a quantum theory with time-symmetry, specifically in the setting of two parties acting on finite-dimensional Hilbert spaces. Our work establishes a clear relationship between formalisms for causal structures with and without time-symmetry. We also derive a set of causal inequalities in a theory-independent manner whose violation can characterize certain exotic causal structures as lying outside the realm of classical causality.

The range of circuit architectures that may be used in an information processing setting are limited by the possible causal structures and time-orderings available among circuit elements. Traditional circuits using classical or quantum information rely only on definite timelike separations between circuit elements in order to signal forward in time and perform computations. A wider range of causal structures may be utilized in information processing when quantum mechanical considerations are taken into account. These hypothesized, indefinite causal structures have been shown to permit new types of circuits that offer a computational advantage over traditional quantum circuits \cite{Hardy:2007fv,2021PhRvR...3d3012R,2016arXiv161205099P,DeAraujoSantos:2016nit, Chiribella:2009lvz} and enhanced communication protocols \cite{Ebler:2018zrj}. 
%hypothesized new indefintie causal structure, not quantum causal structure

Another motivation for studying exotic causal structures comes from quantum gravity where it is expected that superpositions of the spacetime metric, which determines the causal relationship between any pair of points, will play an important role \cite{Penrose:1996cv,delaHamette:2021iwx, Giacomini:2021aof}. Quantum indefiniteness of the metric gives rise to \emph{indefinite causal structure} \cite{hardy2005probability}, an area of research that has seen increasing attention in recent years \cite{Zych:2017tau,PhysRevX.8.011047}. Oreshkov, Costa, and Brukner \cite{Oreshkov:2011er} demonstrated that if one only assumes that quantum mechanics holds locally in its usual form, without imposing a global causal structure, it is possible that exotic \emph{causally non-separable processes} may emerge \cite{Oreshkov:2016mbt}. They demonstrated this by studying a class of superoperators called \emph{process matrices} that generalize the notion of a density matrix and assign probabilities to operational circuits. Causally non-separable processes have associated probability distributions that are not consistent with any convex mixture of definite causal structures. Some of these processes violate causal inequalities \cite{Branciard_2016}, demonstrating inconsistency with definite causal structure as Bell's inequalities do for non-locality. 

One example of a causally non-separable process---the quantum switch\cite{Chiribella:2009lvz}---has been realized in the laboratory \cite{Rubino:2017upz, 2015NatCo...6.7913P} and has been shown to violate a set of device-independent causal inequalities \cite{vanderLugt:2022eqz}. These experiments presumably took place in a definite spacetime, therefore in these cases the causal non-separability of the quantum switch is not due to spacetime indefiniteness in any quantum gravity sense. Rather, it is likely due to a phenomenon such as ``time-delocalized quantum systems,'' proposed by Oreshkov \cite{oreshkov2019time}. This is discussed further at the end of Section~\ref{ts process}. It remains to be seen whether other exotic causal structures besides the quantum switch may be realized in physical circuits, or whether causal non-separability may be produced experimentally involving genuine spacetime indefiniteness.

%future choice
It is conventional wisdom that signalling only happens forward in time, that is, a past event must not be influenced by a future choice. Existing work on indefinite causal structure relies on this time asymmetry by constraining local operations to prohibit signalling backwards in time. An \emph{operation}, as explained in Appendix~\ref{meas}, refers to a party acting in a compact region of spacetime who may transform physical systems and report classical information. This is a central concept to the current work and will be expanded on throughout the text. Similarly, the causal inequalities are derived by supposing that processes with definite causal structure satisfy certain time-asymmetric no-signalling constraints \cite{Branciard_2016}. This unequal treatment of past and future lies in tension with the time-reversal symmetry of classical general relativity. In fact, time-symmetry is a prevalent feature of microscopic physics more generally. The Standard Model of Particle Physics is invariant under the simultaneous reversal of charge, parity, and time known as $\mathcal{CPT}$-invariance \cite{greaves2014cpt}. This example demonstrates how time-symmetric microscopic physics does not necessarily entail invariance under time-reversal alone, but that it might be accompanied by other discrete transformations. 

Dynamics in quantum theory as governed by the Schr\"{o}dinger equation are time-symmetric. The standard treatment of the measurement process in quantum theory introduces time asymmetry, however, Aharonov, Bergmann and Lebowitz (ABL) showed in the 1960's that the von Neumann model of measurement can be formulated in a time-symmetric fashion at the microscopic level \cite{1964PhRv..134.1410A}. Standard operational quantum theory is time-asymmetric because operations are constrained to be trace non-increasing in the forward time direction. However, a time-symmetric formulation of operational quantum theory is also possible \cite{Oreshkov:2015aha, DiBiagio2021,Hardy:2021fqs}. A time-symmetric operational probabilistic theory (TSOPT), of which operational quantum theory is a special case, was recently constructed in \cite{Hardy:2021fqs}. At the level of macroscopic statistical physics, time-asymmetry necessarily arises due to the second law of thermodynamics, but this can always be reduced to time-symmetric microscopic physics. Therefore, it is of interest to develop a time-symmetric formalism for indefinite causal structure---this is the primary goal of the current work.

Existing work by Chiribella and Liu \cite{Chiribella:2020yfr} partially addresses this question from a different perspective by studying the case where the direction of the flow of time through an operation is indefinite. In this work, a broader class of processes was uncovered relative to those that can be described in the time-forward process matrix formalism of Ref.~\cite{Oreshkov:2011er}, offering new computational advantages \cite{Liu:2022auq, Guerin:2016qhz}. They also demonstrated recently that processes with both indefinite causal order and time direction (ICOTD) can maximally violate any causal inequality \cite{liu2024tsirelson}. Indefinite causal order refers to the causal ordering between multiple parties, while indefinite time direction refers to the flow of time within a single party's operation. We sometimes use ``indefinite causal structure'' as a generic term to refer to either or both of these. The ICOTD processes coincide with the full set of processes allowed in our time-symmetric process matrix framework. Chiribella and Liu obtain this set of processes by studying so-called ``bidirectional devices,'' devices having the property that an input-output inversion results in another valid operation. Examples of such devices include half-wave and quarter-wave plates in quantum optics. A process with indefinite time direction known as the Quantum Time Flip has been realized experimentally by Guo et al. \cite{Guo:2022jyf}. The approach of Chiribella and Liu does not manifestly incorporate time-symmetry since still the conditions used to constrain physical operations treat the past and future distinctly. To address this concern in the current work, we treat operations in a time-symmetric way from the onset. Our approach offers a unified framework to describe ICOTD process in terms of process matrices and makes clear the distinction between the time-symmetric (TS) and time-forward (TF) approaches.

The basic ingredients of standard operational probabilistic theory are closed laboratories localized in space and time which take physical systems as input, perform local operations, and output new physical systems. The reader who is unfamiliar with quantum measurement theory (e.g. quantum channels, measurement operations, the Choi-Jamiołkowski isomorphism) may wish to consult Appendix~\ref{meas} for a brief introduction. The closed laboratory makes some classical information called an ``outcome'' available after the operation, which may in some instances be interpreted as a measurement outcome. In the TSOPT developed in \cite{Hardy:2021fqs}, an additional classical variable called an ``income''---the time-reversed counterpart of an outcome---is available before the operation is performed. An income may be interpreted as the initial state of a measuring apparatus or the initial value of a classical ancilla. To remember the difference between inputs/outputs and incomes/outcomes, one may use the following mnemonic: `p' is for \emph{physical} and `c' is for \emph{classical}. One of these local laboratories may be illustrated diagrammatically as a box with various wires coming out of it, as in Fig.~\ref{tsbox}.
\begin{figure}[ht] 
        \includegraphics[width=0.55\columnwidth]{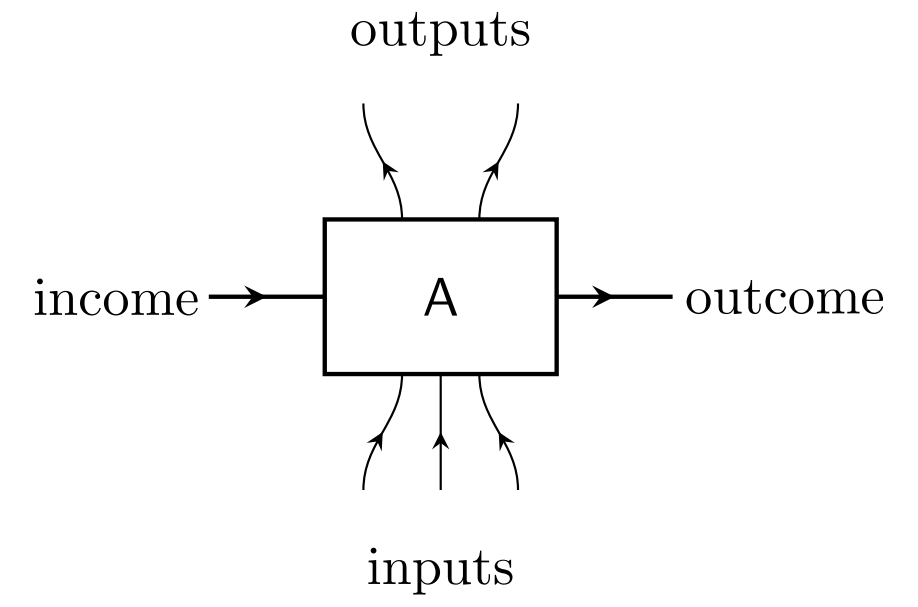} \centering
        \caption{
                \label{tsbox} 
                {Local laboratories in time-symmetric, operational probabilistic theory (TSOPT) are represented by a box with input and output wires carrying physical systems and income and outcome wires carrying classical information \cite{Hardy:2021fqs}.}           
        }
\end{figure}
In quantum theory, these boxes correspond to maps between Hermitian operators on input and output Hilbert spaces.

To illustrate the interpretation of the income variables, consider the following example, illustrated in Fig.~\ref{cartoon}.
\begin{figure}[ht] 
        {
        \includegraphics[width=1.0\columnwidth]{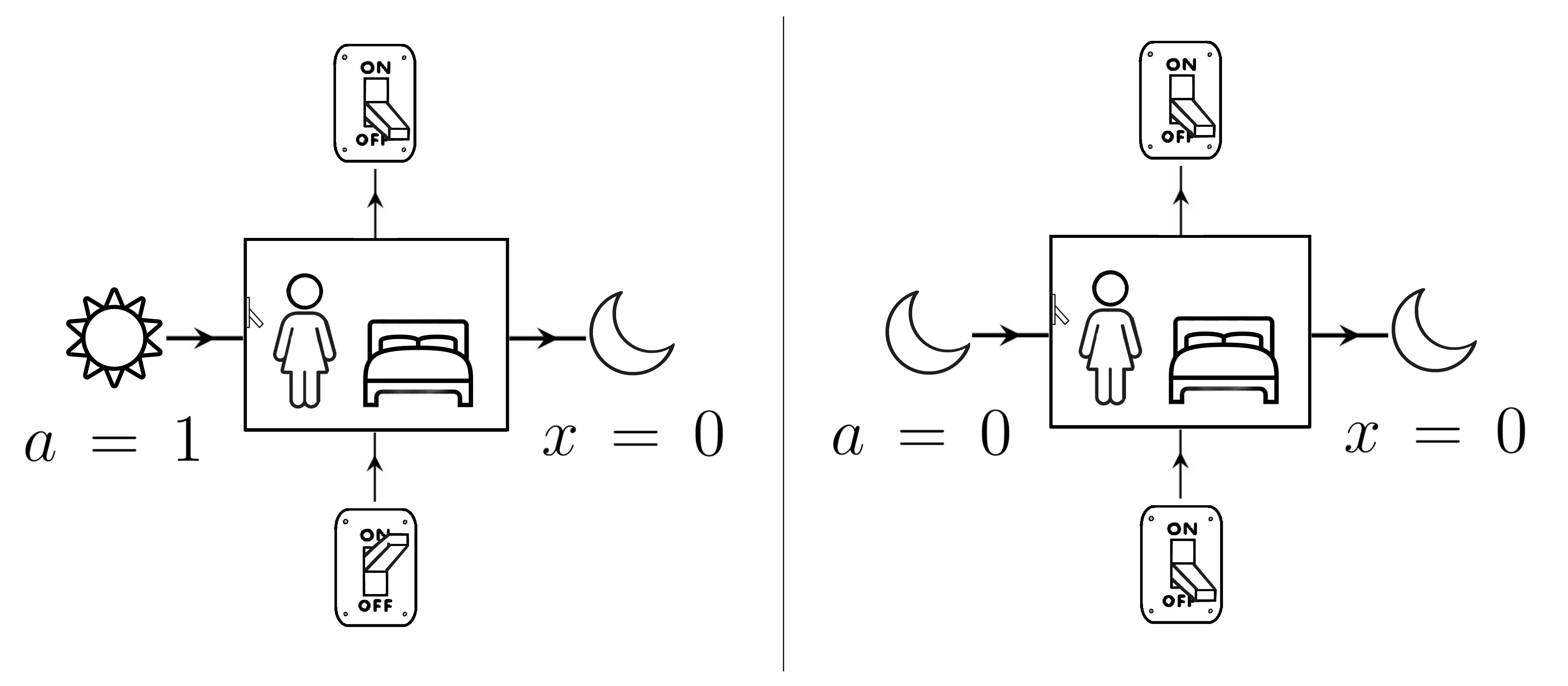} 
                \caption{An example of an operation illustrating the concept of income variables. Given Alice's choice of setting to turn off the light, Alice's action then depends on the value $a$ of the income. If the room is initially bright ($a=1$), Alice turns off the light, transforming the state of the light switch (the input/output), resulting in an outcome of darkness ($x=0$). If the room is initially dark ($a=0$), Alice acts trivially on the light switch, and the outcome is again darkness ($x=0$).}   
                \label{cartoon}
        }
\end{figure}
Alice wants to go to bed, so before she enters her bedroom, she decides she is going to turn off the light. Once she enters, she sees that the light is either on or off. If it is on, she flips the light switch off, and if it is off, she does nothing. Assuming that an external observer does not have access to Alice's memory of whether the light switch was initially on or off, the classical information of the initial state of the light switch is only available before Alice's operation is carried out. This is the hallmark feature of an income. Put differently, an external observer watching a video in reverse of Alice's actions only finds out the initial state of the light switch at the end of the reversed video. An income is distinguished from a \emph{setting}, which is some classical information that the agent, Alice, has free choice to determine at the time of the operation, independent of any external influences. In this example, the setting is Alice's decision to turn off the light and get ready for bed. Independent of the initial state of the light switch or any other factors, Alice has the agency to make any one of four possible decisions of how to operate on the light switch (two possible initial states times two final states equals four deterministic operations). In this example, the inputs and outputs are classical, but they could be quantum systems.

Boxes and wires can be joined together to form circuits, representing experiments carried out in a compact region of spacetime. For example, consider the circuit in Fig.~\ref{testpos}. Circuits with no loose wires represent the joint probability of obtaining the values of variables displayed in \emph{readout boxes} (e.g. the income and outcome variables $u,v,a,x$ in this example).
\begin{figure}[ht] 
        \includegraphics[width=0.7\columnwidth]{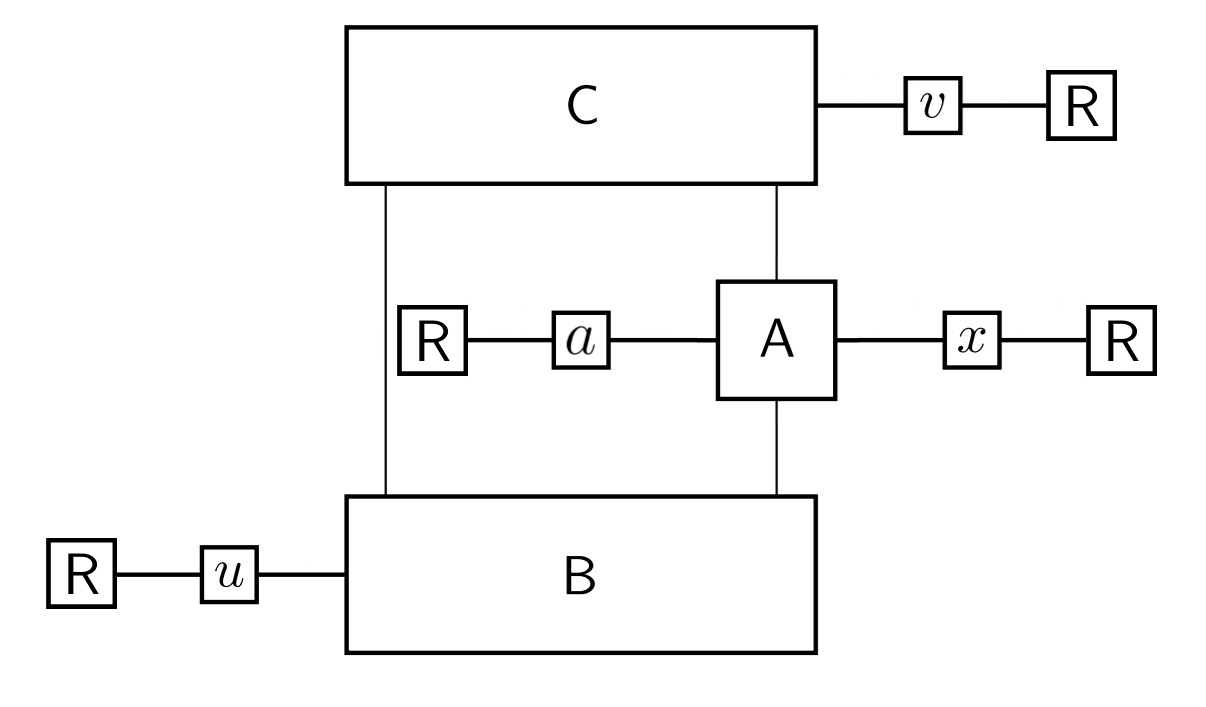} \centering
        \caption{
                \label{testpos} 
                {An example of a circuit in TSOPT. If the probability associated to this circuit is non-negative for all $B$ and $C$, the operation $A$ is said to be completely positive. This is the first physicality condition for operations in TSOPT.}           
        }
\end{figure}
This circuit consists of three operations labelled $\mathsf{A}$, $\mathsf{B}$, and $\mathsf{C}$, and represents the joint probability distribution $p(u,v,a,x)$ to observe the values $u,v,a,$ and $x$ in a single run of the experiment. In Ref.~\cite{Hardy:2021fqs} diagrammatic operator tensor notation (which originally appeared in Ref.~\cite{hardy2010formalismlocal}) was adapted to the time-symmetric setting.  The operator tensor notation looks, formally, like the diagrammatic operational notation (such as that used in Fig.~\ref{dubcausbox}).  While the operator tensor notation has certain advantages, in the present work we will use the Choi-Jamio\l kowski representation as used in Ref.~\cite{Oreshkov:2011er} since this notation is standard for work on process matrices.  

Different input/output wires in circuits may carry different types of physical systems. Additionally, different income/outcome wires may carry classical variables which take value in sets of different cardinality. We denote the cardinality of a classical variable, say $x$ for instance, by $N_x$. The boxes labeled with an `$\mathsf{R}$' indicate the transmission of random information such that each possible value for the classical variable is equally probable. In particular, a readout box $x$ sandwiched between two $\mathsf{R}$ boxes (see Fig.~\ref{readsan}) results in a circuit with a constant probability $1/N_x$. We also have the identity that summing, or \emph{marginalizing}, over the variable in a readout box produces a wire with nothing on it (Fig.~\ref{sumread}).
\begin{figure}[ht] 
        \includegraphics[width=0.33\columnwidth]{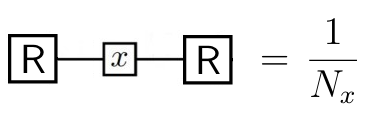} \centering
        \caption{
                \label{readsan} 
                {A readout box sandwiched between two $\mathsf{R}$ boxes gives a constant probability of one over the cardinality of the classical variable.}           
        }
\end{figure}
\begin{figure}[ht] 
        \includegraphics[width=0.39\columnwidth]{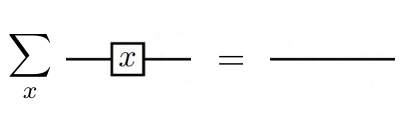} \centering
        \caption{
                \label{sumread} 
                {Marginalizing over the variable in a readout box produces a wire with nothing on it.}           
        }
\end{figure}

If the corresponding probability $p(u,v,a,x)$ of Fig.~\ref{testpos} is non-negative for all values of $u,v,a$ and $x$ and all boxes $\mathsf{B}$ and $\mathsf{C}$, this is equivalent to the complete positivity of operation $\mathsf{A}$. This complete positivity condition can be imposed in quantum theory by placing conditions on the operator associated with the operation $\mathsf A$. In quantum theory, the complete positivity condition of an operation $M^{A_IA_O}_{a,x}$ in the Choi-Jamiołkowski (CJ) representation can be written in operator language
\begin{equation} \label{cp}
    p(u,v,a,x) = \Tr_{A_IA_OB}[M^{A_IA_O}_{a,x}\vdot M^{A_IB}_u \vdot M^{A_OB}_v] \geq 0 \qquad  \forall M^{A_IB}_u, M^{A_OB}_v.
\end{equation}
Complete positivity is the first physicality constraint for operations in TSOPT. The second physicality constraint is called \emph{double causality} and is illustrated in Fig.~\ref{dubcausbox}.
\begin{figure}[ht] 
        \makebox[\textwidth][c]{        \includegraphics[width=1.1\textwidth]{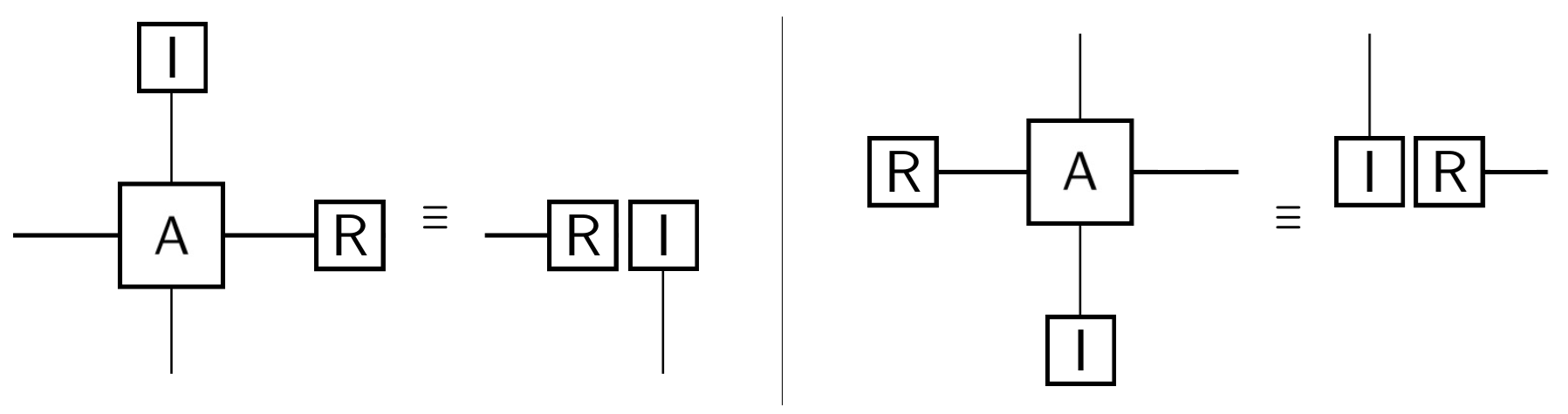}}
        \caption{
                \label{dubcausbox} 
                {Physical operations satisfy two \emph{double causality} conditions: forward causality (left) and backward causality (right). Open wires correspond to open wires on the left- and right-hand sides of an equality.}           
        }
\end{figure}
The box denoted with an `$\mathsf{I}$' is the ignore operation. In quantum theory, the `$\mathsf{I}$' box is taken to be the identity operator so that the corresponding Hilbert space is effectively ``ignored'' through a partial trace when calculating probabilities. In the operator language of quantum theory, the double causality constraints can be written,
\begin{equation} \label{dubcaus}
    \Tr_{A_O}\sum_x M_{a,x}^{A_IA_O} = \mathbb{1}^{A_I}, \qquad \frac{1}{N_a}\Tr_{A_I}\sum_aM_{a,x}^{A_IA_O} = \frac{1}{N_x}\mathbb{1}^{A_O}.
\end{equation}
We call the first condition here \emph{forward causality}, and the second \emph{backward causality}. These constraints are related to trace preservation, since if outcomes (incomes) are averaged over, the trace of the output (input) state is equal to the trace of the input (output) state. While these constraints may appear to be time-asymmetric due the $N_a$ and $N_x$ factors, there is a freedom in the normalization of the `$\mathsf{R}$' and `$\mathsf{I}$' boxes that can be used to shift these factors between the two constraints.  In the operator tensor notation as used in \cite{Hardy:2021fqs} it is possible to absorb these factors into the corresponding symbols such that the double causality constraints are in time-symmetric form.  In the current presentation, the placement of these factors was chosen to be the most intuitive for a reader familiar with the time-forward perspective. A quantum operation $M^{A_IA_O}_{a,x}$ satisfying Eqn.~(\ref{cp}) and Eqn.~(\ref{dubcaus}) is termed \emph{physical}. Note that the physicality constraints Eqns.~(\ref{cp}, \ref{dubcaus}) are manifestly time-symmetric. By restricting to circuits composed of physical operations, we are guaranteed to be dealing with processes that may equally well be run forward or backward in time. 

In Section~\ref{causineq}, we use the language of TSOPT to derive conditions that are satisfied by any process with a definite causal order. These conditions result in a set of forward and backward causal inequalities---a larger set than those considered previously by Branciard et al. \cite{Branciard_2016}. In Section~\ref{ts process}, we discuss the set of constraints that characterizes the process matrices consistent with time-symmetric quantum theory. We then expand the most general time-symmetric process matrix in a basis of Hermitian operators in order to analyze new types of allowed processes. In Section~\ref{discussion}, we present an example of a process matrix that violates a backward causal inequality but not a forward causal inequality. Hence, the time-symmetric process matrix formalism offers new opportunities to certify causal non-separability through backward causal inequality violation. We also show that the time-symmetric process matrix framework naturally incorporates processes with an indefinite time direction by explicitly constructing the process matrix associated to the so-called ``quantum time flip'' \cite{Chiribella:2020yfr, stromberg2022experimental, liu2023quantum}. The formalism presented here thus offers a unified framework for indefinite causal order and indefinite time direction. Further, we discuss in this section the one-to-one correspondence between operations in the time-symmetric (TS) and time-forward (TF) theories. This correspondence is broken in the presence of process matrices, due to the possibility for post-selection in the TS theory. Finally, we conclude in Section~\ref{conclude}.

\section{Processes with definite causal structure}
\label{causineq}

We begin by studying circuits that correspond to a process with definite causal order. Consider two parties, Alice and Bob, performing operations satisfying the time-symmetric physicality conditions of Fig.~\ref{testpos} and Fig.~\ref{dubcausbox}. Note that in this section we use TSOPT without restricting to quantum theory in order to remain completely general. 

The most general operational circuit with the definite causal ordering $A\preceq~\hspace{-0.2cm}B$ (Bob does not precede Alice) is given in Fig.~\ref{abprocess}.
\begin{figure}[ht] 
        \includegraphics[width=0.67\columnwidth]{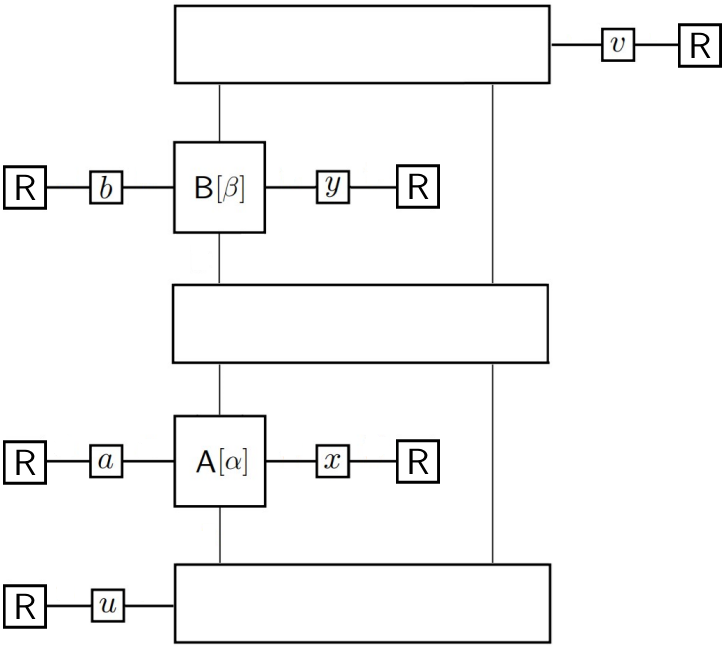} \centering
        \caption{
                \label{abprocess} 
                {The most general process where $B$ is in the causal future of $A$ ($A \preceq B$).}           
        }
\end{figure}
Here, we allow Alice and Bob to have setting choices $\alpha$ and $\beta$, respectively \cite{Hardy:2021fqs}. Each box is taken to be physical. For this reason, we make explicit the possibility for pre-selection and post-selection by including the variables $u$ and $v$, respectively. If the pre-selection variable $u$, for instance, is marginalized over, then by Eqn.~(\ref{dubcaus}) the lowermost box in the circuit is required to be the identity operation. We would like to allow for the possibility that a non-trivial physical system, such as a density matrix that is not maximally mixed, is known to be available at the beginning of the experiment. This necessitates the presence of the pre-selection variable $u$. In the spirit of time-symmetry, we similarly include the post-selection variable $v$. Additionally, we allow for the possibility of ancillary systems that may be entangled with Alice and Bob's inputs/outputs.

The circuit Fig.~\ref{abprocess} corresponds to a probability distribution 
\begin{equation*}
    p^{A\preceq B}(a,b,x,y,u,v|\alpha,\beta).
\end{equation*}
We can apply the double causality rules to obtain linear constraints on these probabilities. Marginalizing over the post-selection variable $v$ and Bob's outcome $y$, the circuit reduces to that of Fig.~\ref{marg} (i). This computation goes through by applying the identity of Fig.~\ref{sumread}, followed by the identity Fig.~\ref{readsan}, and finally by applying double causality Fig.~\ref{dubcausbox} twice. Note that an `$\mathsf{I}$' box connected to multiple wires factors into individual `$\mathsf{I}$' boxes on each wire, just as an identity operator acting on a tensor product of Hilbert spaces does.
\begin{figure}[ht] 
        \makebox[\textwidth][c]{        \includegraphics[width=1.15\textwidth]{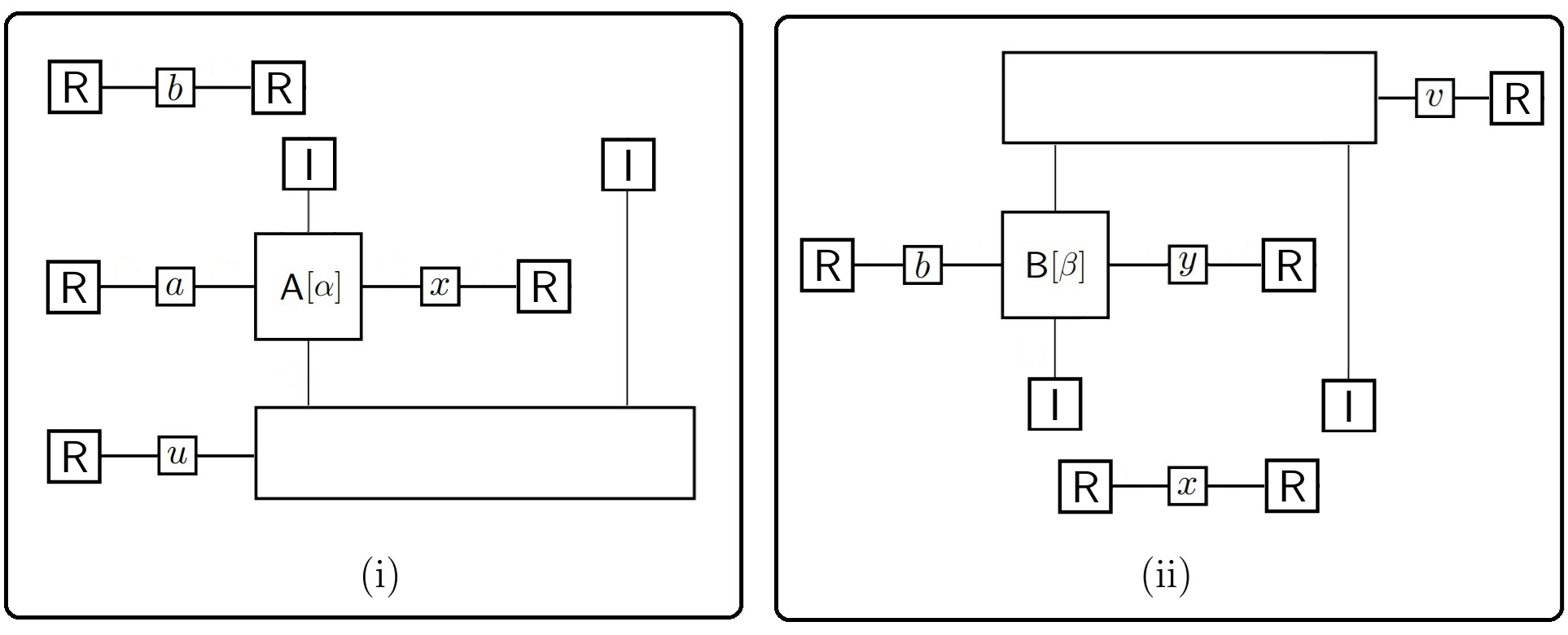}}
        \caption{
                \label{marg} 
                {The circuit of Fig.~\ref{abprocess} reduces by the double causality rules when certain variables are marginalized. (i) The variables $y$ and $v$ are marginalized. (ii) The variables $u$ and $a$ are marginalized.}           
        }
\end{figure}
The dependence on Bob's setting $\beta$ is made trivial, while the dependence on Bob's income $b$ factors out from the rest of the circuit. Thus we arrive at the familiar no-signalling constraint
\begin{equation} \label{ABforward}
    p^{A\preceq B}(a,b,x,u|\alpha,\beta) = \frac{1}{N_b}p^{A\preceq B}(a,x,u|\alpha), \qquad \forall a,b,x,u,\alpha,\beta.
\end{equation}
This equation states that Bob cannot signal to Alice unless there is post-selection---either in the post-selection variable $v$ or in Bob's outcome $y$. In other words, Eqn.~(\ref{ABforward}) states a statistical independence between Bob's setting $\beta$ and the variables $a, x, \alpha$ of Alice's laboratory. This is what is meant by no-signalling.

Now, we repeat the analysis by marginalizing over the pre-selection variable $u$ and Alice's income $a$. The circuit reduces in this case to that in Fig.~\ref{marg} (ii). The dependence on Alice's setting $\alpha$ and her outcome $x$ becomes trivial, resulting in an additional constraint
\begin{equation} \label{ABbackward}
    p^{A\preceq B}(b,x,y,v|\alpha,\beta) = \frac{1}{N_x}p^{A\preceq B}(b,y,v|\beta), \qquad \forall b,x,y,v,\alpha,\beta.
\end{equation}
This constraint is the time-reversal of Eqn.~(\ref{ABforward}), stating that Alice cannot signal to Bob unless there is pre-selection---either in the pre-selection variable $u$ or in Alice's income $a$. This inability to signal forward in time seems unfamiliar, yet, in real experiments there is almost always some form of pre-selection present, and hence the apparent issue is evaded. Whenever a measuring apparatus is prepared in some initial configuration, or the state of some physical system is known before the start of the experiment, there is pre-selection present and it is possible to signal forward in time.

It is straightforward to repeat the previous analysis for the definite causal ordering $B\preceq A$ (Alice does not precede Bob), illustrated in Fig.~\ref{BAprocess}.
\begin{figure}[ht] 
        \includegraphics[width=0.67\columnwidth]{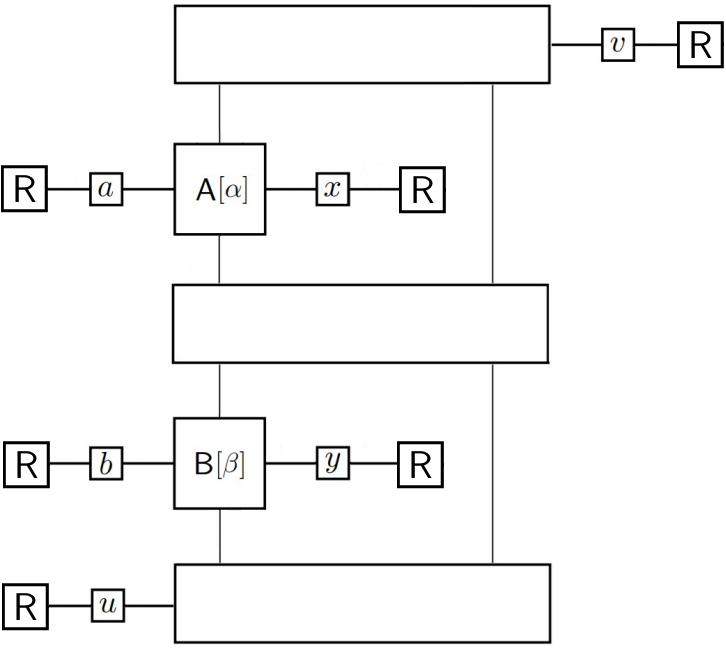} \centering
        \caption{
                \label{BAprocess} 
                {The most general process where $A$ is in the causal future of $B$ ($B \preceq A$).}           
        }
\end{figure}
We similarly arrive at two linear constraints:
\begin{equation} \label{BAforward}
    p^{B\preceq A}(a,b,y,u|\alpha,\beta) = \frac{1}{N_a}p^{B\preceq A}(b,y,u|\beta), \qquad \forall a,b,y,u,\alpha,\beta,
\end{equation}
stating no-signalling from Alice to Bob without post-selection, and
\begin{equation} \label{BAbackward}
    p^{B\preceq A}(a,x,y,v|\alpha,\beta) = \frac{1}{N_y}p^{B\preceq A}(a,x,v|\alpha), \qquad \forall a,x,y,v,\alpha,\beta,
\end{equation}
stating no-signalling from Bob to Alice without pre-selection.

Biparite \emph{causally separable} correlations are defined as those which are consistent with a definite causal ordering $A\preceq B$, $B \preceq A$, or a convex mixture of the two. That is,
\begin{align} \label{convex}
    p(a,b,x,y,u,v|\alpha,\beta) &= q\, p^{A\preceq B}(a,b,x,y,u,v|\alpha,\beta) \nonumber \\
    &\qquad + (1-q)\, p^{B\preceq A}(a,b,x,y,u,v|\alpha,\beta)
\end{align}
for some $q\in [0,1]$. Following the analysis of Branciard et al. Ref.~\cite{Branciard_2016}, we arrive at a set of causal inequalities that are necessarily satisfied by causally separable correlations. This derivation relies only on forward causality (the first condition in Fig.~\ref{dubcausbox}) and is therefore associated with the time-forward perspective. We adopt the terminology of Branciard et al. in naming the first of these GYNI (Guess Your Neighbor's Income):
\begin{equation} \label{GYNI}
    \frac{1}{N_aN_b}\sum_{a,b,x,y}\delta_{x,b}\,\delta_{y,a}\,p(x,y|\alpha,\beta,a,b,u,v) \leq \frac{1}{2}.
\end{equation}
This can be understood as placing an upper bound on the probability for success in a (time-forward) game where Bob's objective is to match his outcome to Alice's income, and Alice's objective is to match her outcome to Bob's income. The second causal inequality is called LGYNI (Lazy Guess Your Neighbor's Income):
\begin{equation} \label{LGYNI}
    \frac{1}{N_\alpha N_\beta N_a N_b}\sum_{\alpha,\beta, a,b,x,y}\delta_{\alpha(y\oplus a),0}\,\delta_{\beta(x\oplus b),0}\,p(x,y|\alpha,\beta,a,b,u,v) \leq \frac{3}{4}.
\end{equation}
The symbol `$\oplus$' denotes addition modulo 2. This second inequality corresponds to a modified game where a player is only required to guess their neighbor's income if their own setting is equal to one, otherwise they are free to produce any outcome they desire.

In the time-forward setting of Ref.~\cite{Branciard_2016}, this is an exhaustive list of the causal inequalities for two parties. The \emph{causal polytope} is the high-dimensional convex structure of causally separable probability distributions defined by Eqn.~(\ref{convex}) and no-signalling constraints \cite{gogioso2023geometry}. In the time-forward formalism, the GYNI and LGYNI causal inequalities describe all of the non-trivial facets of the causal polytope. There are more in the time-symmetric formalism. With time-symmetry, there is automatically a time-reversed counterpart for each of the causal inequalities. They are derived using only backwards causality (the second condition in Fig.~\ref{dubcausbox}) and are associated to the time-backward perspective. These include a time-reversed GYNI:
\begin{equation} \label{GYNIr}
    \frac{1}{N_xN_y}\sum_{a,b,x,y}\delta_{x,b}\,\delta_{y,a}\,p(a,b|\alpha,\beta,x,y,u,v) \leq \frac{1}{2},
\end{equation}
as well as a time-reversed LGYNI:
\begin{equation}\label{LGYNIr}
    \frac{1}{N_\alpha N_\beta N_x N_y}\sum_{\alpha,\beta, a,b,x,y}\delta_{\beta(y\oplus a),0}\,\delta_{\alpha(x\oplus b),0}\,p(a,b|\alpha,\beta,x,y,u,v) \leq \frac{3}{4}.
\end{equation}
We refer to the first two inequalities Eqns.~(\ref{GYNI}, \ref{LGYNI}) as the forward inequalities and the last two Eqns.~(\ref{GYNIr}, \ref{LGYNIr}) as the backward inequalities. Any causally separable process necessarily satisfies both the forward and backward inequalities. Note that Eqn.~(\ref{LGYNI}) and Eqn.~(\ref{LGYNIr}) are valid in these forms only up to two settings each, i.e. $1\leq N_\alpha, N_\beta \leq 2$. 

A violation of a causal inequality certifies the presence of indefinite causal structure in a given process. In Section~\ref{ts process}, we study correlations in time-symmetric quantum theory without assuming a definite causal order. We will then use the causal inequalities to certify that there exist causally non-separable processes that are consistent with time-symmetric quantum theory.

\section{Indefinite causal structure in time-symmetric quantum theory}
\label{ts process}

\subsection{The time-symmetric process matrix}
Consider a bipartite process where Alice and Bob act by quantum operations $M_{a,x}^{A_IA_O} \in \mathcal{L}(\mathcal{H}^{A_I}\otimes\mathcal{H}^{A_O})$ and $M_{b,y}^{B_IB_O}\in \mathcal{L}(\mathcal{H}^{B_I}\otimes\mathcal{H}^{B_O})$ in the CJ representation. We use the notation $\mathcal{L}(\mathcal{H}^X)$ to denote the space of linear operators on a Hilbert space $\mathcal{H}^X$. Without loss of generality, we temporarily omit the dependence on settings $\alpha$ and $\beta$. Alice has an income $a$ and an outcome $x$. Bob has an income $b$ and an outcome $y$. We require that Alice and Bob's operations are physical, satisfying completely positivitiy Eqn.~(\ref{cp}) and double causality Eqn.~(\ref{dubcaus}). In the time-symmetric scenario, input and output Hilbert spaces are of the same dimension which we denote by $d_A \coloneq d_{A_I} = d_{A_O}$ and $d_B \coloneq d_{B_I} = d_{B_O}$. 

The most general probabilities that can be associated to Alice and Bob's quantum operations are represented by the circuit in Fig.~\ref{prepostw}.
\begin{figure}[ht] 
        \includegraphics[width=1.0\columnwidth]{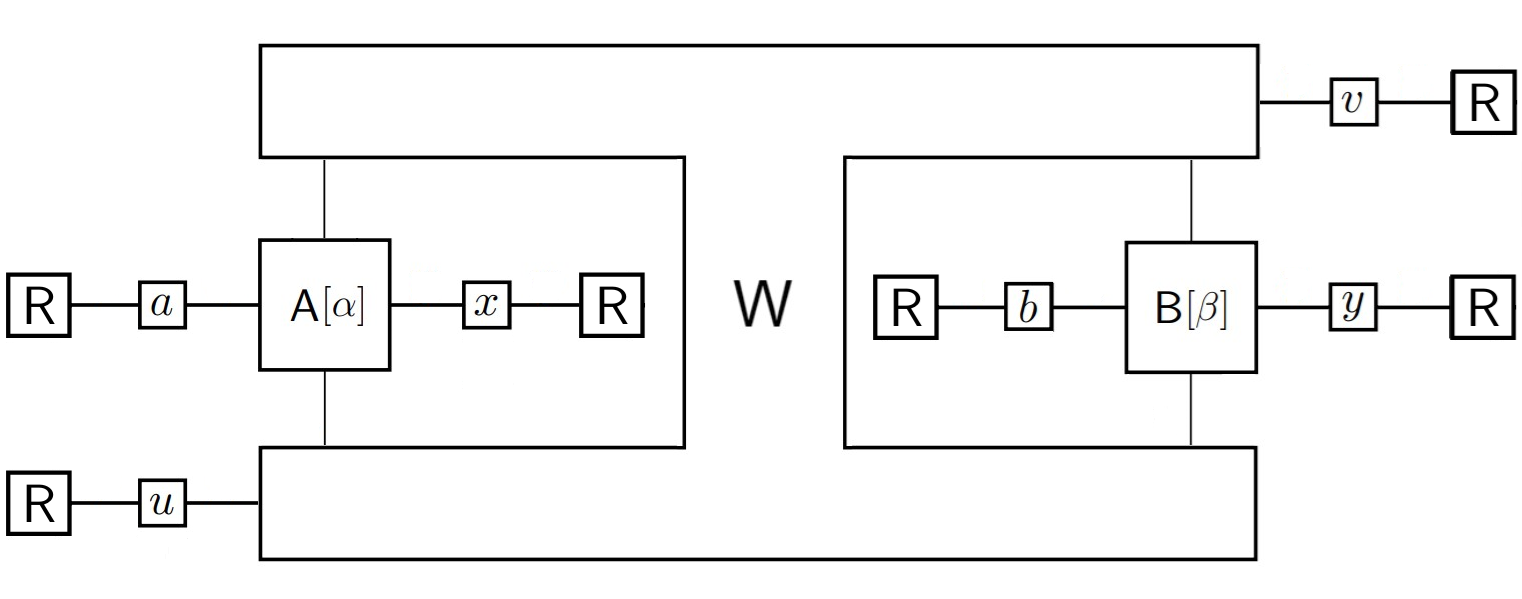} \centering
        \caption{
                \label{prepostw} 
                {The most general bipartite process involving local operations $A$ and $B$ is specified by a process matrix $W$.}           
        }
\end{figure}
The box labelled $W$ represents a Hermitian operator
\begin{equation}
    W_{u,v}^{A_IA_OB_IB_O}\in \mathcal{L}(\mathcal{H}^{A_I}\otimes\mathcal{H}^{A_O}\otimes\mathcal{H}^{B_I}\otimes\mathcal{H}^{B_O})
\end{equation}
called the \emph{process matrix} \cite{Oreshkov:2011er}. The classical variables $u$ ad $v$, which we call the \emph{pre-selection and post-selection variables}, represent information that is available before and after the experiment. The process matrix can be thought of as a generalization of a density matrix since it determines probabilities in an analogous way
\begin{equation}
    p(a,b,x,y,u,v) = \Tr_{A_IA_OB_IB_O}[W_{u,v}^{A_IA_OB_IB_O}\cdot(M_{a,x}^{A_IA_O}\otimes M_{b,y}^{B_IB_O}) ].
\end{equation}
More complicated objects can also be encoded in the process matrix, including quantum channels, quantum channels with memory, and, as we will see, indefinite causal structures. As opposed to earlier works \cite{Oreshkov:2011er} which implicitly condition on the pre-selection variable, here we study the joint probability $p(a,b,x,y,u,v)$ among all circuit variables, including the pre-selection variable $u$ and post-selection variable $v$. The transition between joint and conditioned probabilities is not straightforward in the present context since the normalization factor in Bayes' rule will depend non-trivially on Alice and Bob's choices of operations. An analogous phenomenon occurs in the context of the ABL rule \cite{Mohrhoff_2001}, where probabilities with pre-selection and post-selection in quantum theory contain a non-trivial normalization factor. When considering only pre-selection (or only post-selection) in the process matrix, the normalization factor is constant and it is trivial to transition between joint and conditioned probabilities. It will turn out that this is the case for the time-forward process matrices, but not the full set of time-symmetric process matrices.

In order to associate non-negative probabilities to circuits, we require that the process matrix is a positive semi-definite operator
\begin{equation} \label{positiveW}
    W_{u,v}^{A_IA_OB_IB_O} \geq 0 \qquad \forall u,v.
\end{equation}
We also require that the process matrix associates normalized probabilities. For simplicity, we define the averaged operations
\begin{equation} \label{avgop}
    M^{A_IA_O} \equiv \sum_{a,x}M_{a,x}^{A_IA_O}, \qquad M^{B_IB_O} \equiv \sum_{b,y}M_{b,y}^{B_IB_O}.
\end{equation}
Then the requirement of normalized probabilities is equivalent to imposing
\begin{align} \label{sumtoone}          \sum_{u,v}\Tr_{A_IA_OB_IB_O}\bigg[&W_{u,v}^{A_IA_OB_IB_O}\cdot(M^{A_IA_O}\otimes M^{B_IB_O}) \bigg]=1, 
\end{align}
for all physical, averaged operations $M^{A_IA_O}$ and $M^{B_IB_O}$. 

As shown in Appendix~\ref{derW}, the requirement of normalized probabilities translates into four linear constraints on the process matrix. We adopt the notation of Ref.~\cite{Branciard_2016} to write ${}_{X}W \coloneq \frac{1}{d_X}\mathds{1}^X\otimes\Tr_X[W]$ as the ``trace part'' of an operator $W\in \mathcal{L}(\mathcal{H}^X)$. We also use ${}_{[1-X]}W \coloneq W - {}_{X}W$ to denote the ``traceless part'' of $W$. Then the normalization constraints on the process matrix can be written: 
\begin{align}
    \sum_{u,v}\Tr_{A_IA_OB_IB_O}[W_{u,v}] &= d_Ad_B, \label{Wcons1}\\
    \sum_{u,v}{}_{B_IB_O[1-A_I][1-A_O]}W_{u,v}&=0, \label{Wcons2}\\
    \sum_{u,v}{}_{A_IA_O[1-B_I][1-B_O]}W_{u,v}&=0, \label{Wcons3}\\
    \sum_{u,v}{}_{[1-A_I][1-A_O][1-B_I][1-B_O]}W_{u,v}&=0. \label{Wcons4}
\end{align}
The first of these ensures that probabilities are normalized if Alice and Bob act trivially ($M^{A_IA_O} = (1/d_A)\mathds{1}^{A_IA_O}$), as illustrated in Fig.~\ref{normw}.
\begin{figure}[ht] 
        \includegraphics[width=0.73\columnwidth]{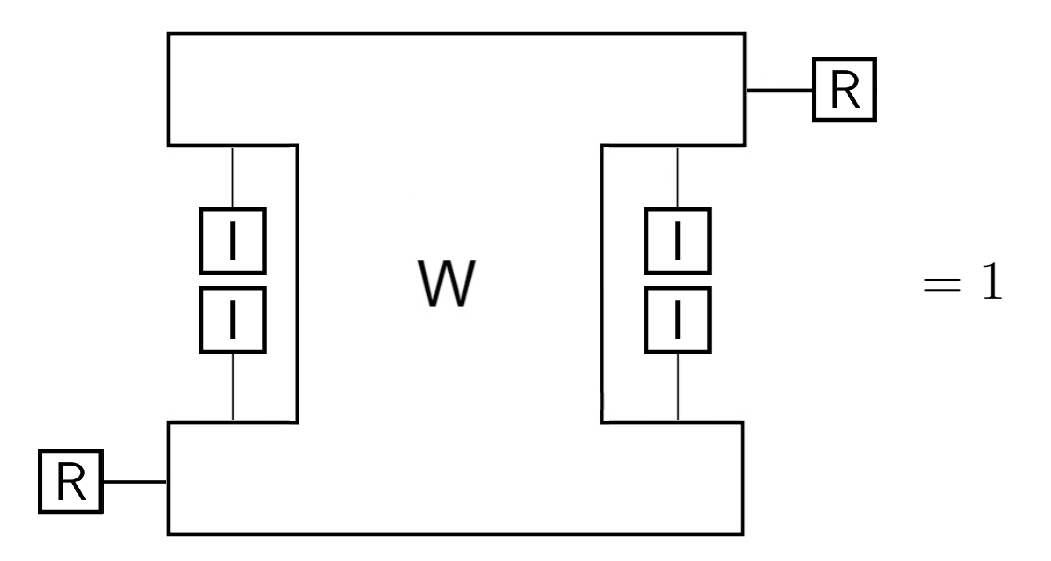} \centering
        \caption{
                \label{normw} 
                {Eqn.~(\ref{Wcons1}) requires that the process matrix $W$ associates normalized probabilities when Alice and Bob act with trivial operations. The readout boxes containing $u$ and $v$ are removed by the identity in Fig.~\ref{sumread}} once they are summed over.           
        }
\end{figure}
If Alice's and/or Bob's operation also contains a non-trivial traceless part, the remaining three constraints guarantee that probabilities remain normalized.

An additional set of constraints must be imposed on process matrices with no post-selection or no pre-selection, i.e. $u$ or $v$ is marginalized (summed), respectively. Process matrices with the post-selection variable marginalized satisfy additional no-signalling constraints:
\begin{align}
    \sum_v {}_{A_I[1-B_O]}W_{u,v} &= 0, \label{vcons1}\\
    \sum_v {}_{B_I[1-A_O]}W_{u,v} &= 0, \label{vcons2}\\
    \sum_v {}_{[1-A_O]}{}_{[1-B_O]}W_{u,v}&=0. \label{vcons3}
\end{align}
The derivation of these conditions is given in Appendix~\ref{nopostpre}. These three constraints guarantee that if one of the parties marginalizes over their outcome, then they cannot signal backward in time to the other party. This principle is respected by the no-signalling constraints derived in Section~\ref{causineq} and we assert that it remains true in indefinite causal structures. Similarly, there are three constraints for a process matrix with the pre-selection variable marginalized:
\begin{align}
    \sum_u {}_{A_O[1-B_I]}W_{u,v} &= 0, \label{ucons1}\\
    \sum_u {}_{B_O[1-A_I]}W_{u,v} &= 0, \label{ucons2}\\
    \sum_u {}_{[1-A_I]}{}_{[1-B_I]}W_{u,v}&=0. \label{ucons3}
\end{align}
These guarantee that if one party marginalizes over their income, they cannot signal forward in time to the other party.

The normalization conditions in Eqns.~(\ref{Wcons2}-\ref{Wcons4}) are redundant in light of the no-signalling conditions of Eqns.~(\ref{vcons1}-\ref{ucons3}). We include them anyway for completeness, in case one is interested which assumptions lead to which constraints, and for comparison with other works in the literature. Eqn.~(\ref{Wcons1}) is not redundant and must be imposed independently.

The process matrices with non-trivial pre-selection/post-selection (both $u$ and $v$ are not marginalized) form the largest class of bipartite process matrices. These process matrices satisfy positivity, Eqn.~(\ref{positiveW}), and must never assign probabilities greater than unity, but otherwise they are unconstrained. This is analogous to how an operation satisfies similar minimal constraints until either the outcome or income is marginalized, following which it must satisfy forward or backward causality. Chiribella and Liu study this same class of processes in Appendix D of Ref.~\cite{liu2024tsirelson}, referrring to them as processes with ``Indefinite Causal Order and Time Direction'' (ICOTD). They demonstrate that the classical processes with ICOTD form a sufficiently large set to produce any arbitrary probability distribution, even in the $N$-party case. In particular, they show that these classical ICOTD processes can achieve the algebraic maximum of every causal inequality.

\subsection{Expansion in basis operators}
The constraints for a process matrix with no pre-selection and/or no post-selection restrict the kinds of terms that are allowed to appear in the process matrix. It is instructive to expand in a basis set of Hermitian operators
\begin{equation}
    \{\sigma^X_\mu\}_{\mu=0}^{d_X^2-1}, \qquad \sigma_\mu^X \in \mathcal{L}(\mathcal{H}^X).
\end{equation}
It is always possible to choose a basis such that the $\mu=0$ operator is the identity,
\begin{equation}
    \sigma_0^X = \mathbb{1}^X,
\end{equation}
and all others are traceless,
\begin{equation}
    \Tr_X\sigma_j^X = 0, \qquad j>0.
\end{equation}
Further, we require this set of operators to be orthogonal under the Hilbert-Schmidt inner product
\begin{equation}
    \Tr_X[\sigma_\mu^X\sigma_\nu^X] = d_X \delta_{\mu\nu}.
\end{equation}
With these properties, any Hermitian operator on $\mathcal{H}^X$ can be decomposed as a linear combination of basis operators. In particular, an operation in the CJ representation has the decomposition
\begin{equation} \label{opdecomp}
    M^{X_IX_O} = \sum_{\mu\nu}\mathcal{X}_{\mu\nu}\sigma_\mu^{X_I}\sigma_\nu^{X_O} \in \mathcal{L}(\mathcal{H}^{X_I}\otimes\mathcal{H}^{X_O}).
\end{equation}
When there is no risk for confusion, we omit the tensor product $`\otimes$'. Then specifying the operation $M^{X_IX_O}$ amounts to specifying a set of coefficients $\mathcal{X}_{\mu\nu}$. The double causality constraints Eqn.~(\ref{dubcaus}) impose restrictions on the $\mathcal{X}_{\mu\nu}$,
\begin{equation}
    \mathcal{X}_{00} = \frac{1}{d_X}, \quad \mathcal{X}_{\mu0} = 0, \quad \mathcal{X}_{0\nu} = 0,
\end{equation}
as can be checked by taking the partial traces of Eqn.~(\ref{opdecomp}). Then we can write the most general, physical operation in the form
\begin{equation}
    M^{X_IX_O} = \frac{1}{d_X}\bigg( \mathbb{1}^{X_IX_O} + \sum_{ij>0}\mathcal{X}_{ij}\sigma_i^{X_I}\sigma_j^{X_O}\bigg).
\end{equation}
The coefficients $\mathcal{X}_{ij}$ with $i,j>0$ are free to take any values so long as the resulting $M^{X_IX_O}$ is positive semi-definite. Note that there are fewer allowed terms here in the time-symmetric formulation than were found in the time-forward case (Ref.~\cite{Oreshkov:2011er}, see Supplementary Methods). This is due to the fact that the double causality conditions contain an additional constraint (backward causality) that was absent previously in the time-forward formalism.

Now, we use the set of basis Hermitian operators to expand the bipartite process matrix
\begin{equation}
    W_{u,v}^{A_IA_OB_IB_O} = \sum_{\mu\nu\alpha\beta}w_{\mu\nu\alpha\beta}(u,v)\sigma_\mu^{A_I}\sigma_\nu^{A_O}\sigma_\alpha^{B_I}\sigma_\beta^{B_O},
\end{equation}
as well as the physical operations of Alice and Bob,
\begin{align}
     M^{A_IA_O} &= \frac{1}{d_A}\bigg( \mathbb{1}^{A_IA_O} + \sum_{ij>0}\mathcal{A}_{ij}\sigma_i^{A_I}\sigma_j^{A_O}\bigg), \nonumber \\
      M^{B_IB_O} &= \frac{1}{d_B}\bigg( \mathbb{1}^{B_IB_O} + \sum_{ij>0}\mathcal{B}_{ij}\sigma_i^{B_I}\sigma_j^{B_O}\bigg). \label{TSopexpand}
\end{align}
The coefficients $w_{\mu\nu\alpha\beta}(u,v)$ have a dependence on the pre-selection and post-selection variables which we will suppress from here on for simplicity of notation. Given these decompositions, we can rewrite the requirement of normalized probabilities in Eqn.~(\ref{sumtoone}),
\begin{align}
    1 &= \sum_{u,v}\Tr_{A_IA_OB_IB_O}[W_{u,v}^{A_IA_OB_IB_O}\cdot(M^{A_IA_O}\otimes M^{B_IB_O}) ] \nonumber \\
    &= \sum_{u,v}\bigg(d_Ad_Bw_{0000} + \frac{d_A}{d_B}\sum_{ij>0}w_{ij00}\mathcal{A}_{ij} + \frac{d_B}{d_A}\sum_{lm>0}w_{00lm}\mathcal{B}_{lm} \nonumber \\
    &\qquad\qquad\qquad\qquad\qquad + d_Ad_B\sum_{ijlm>0}w_{ijlm}\mathcal{A}_{ij}\mathcal{B}_{lm}\bigg)
\end{align}
using the defining properties of the set of orthogonal operators $\{\sigma_\mu^X\}_\mu$. The total probability is required to be unity for any choice of Alice and Bob's operations, that is, for any choice of $\mathcal{A}_{ij}$ and $\mathcal{B}_{ij}$. This results in the following requirements:
\begin{align}
    \sum_{u,v}w_{0000} = \frac{1}{d_Ad_B}, \quad \sum_{u,v}w_{ij00} = 0, \quad \sum_{u,v}w_{00lm} = 0, \quad &\sum_{u,v}w_{ijlm} = 0, \nonumber \\
    &\forall i,j,l,m>0.
\end{align}
We can further constrain the coefficients $w_{\mu\nu\alpha\beta}$ by enforcing Eqns.(\ref{vcons1}-\ref{ucons3}). The constraints corresponding to no post-selection can be written
\begin{equation} \label{nopostexp}
    \sum_v w_{0\alpha\beta i} = 0, \quad \sum_v w_{\alpha i 0 \beta} = 0, \quad \sum_v w_{\alpha i \beta j} = 0, \quad \forall \alpha, \beta \geq 0, \,i,j>0.
\end{equation}
The constraints corresponding to no pre-selection are
\begin{equation} \label{nopreexp}
    \sum_u w_{\alpha 0 i \beta} = 0, \quad \sum_u w_{i \alpha\beta 0} = 0, \quad \sum_u w_{i \alpha j \beta} = 0, \quad \forall \alpha, \beta \geq 0, \,i,j>0.
\end{equation}
The most general, physical bipartite process matrix can be written as a sum
\begin{equation} \label{genproc}
    W_{u,v}^{A_IA_OB_IB_O} = \frac{1}{d_Ad_B}\bigg(p_0(u,v)\mathbb{1}^{A_IA_OB_IB_O} + \sigma_{u,v}^{TS} + \sigma_{u,v}^{TF} + \sigma_{u,v}^{TB} + \sigma_{u,v}^{ISO} \bigg)
\end{equation}
where the operators $\sigma_{u,v}^{TS},  \sigma_{u,v}^{TF}, \sigma_{u,v}^{TB}, \sigma_{u,v}^{ISO}$ are Hermitian operators in $\mathcal{L}(\mathcal{H}^{A_I}\otimes\mathcal{H}^{A_O}\otimes\mathcal{H}^{B_I}\otimes\mathcal{H}^{B_O})$ which may depend non-trivially on $u$ and $v$. The factor $p_0(u,v)$ is the joint probability for observing the values $u,v$ with all other variables marginalized when Alice and Bob act trivially ($M^{A_IA_O} = (1/d_A)\mathds{1}^{A_IA_O}$). This of course must satisfy $\sum_{u,v}p_0(u,v) = 1$. The labels on the final four terms stand for ``Time-Symmetric,'' ``Time-Forward,'' ``Time-Backward,'' and ``Isolated,'' respectively. These operators are defined by:
\begin{align} \label{sigmaops}
    \sigma_{u,v}^{TS} &= \sum_{ij>0}\bigg(a_{ij}\sigma_i^{A_I}\sigma_j^{A_O} + b_{ij}\sigma_i^{B_I}\sigma_j^{B_O}\bigg) + \sum_{ijkl>0}c_{ijkl}\sigma_i^{A_I}\sigma_j^{A_O}\sigma_k^{B_I}\sigma_l^{B_O}, \nonumber \\
    \sigma_{u,v}^{TF} &= \sum_i\bigg(d_i\sigma_i^{A_I} + e_i\sigma_i^{B_I}\bigg) + \sum_{ij>0}f_{ij}\sigma_i^{A_I}\sigma_j^{B_I} \nonumber \\
    &\qquad\qquad\qquad\qquad + \sum_{ijk}\bigg(g_{ijk}\sigma_i^{A_I}\sigma_j^{A_O}\sigma_k^{B_I} + h_{ijk}\sigma_i^{A_I}\sigma_j^{B_I}\sigma_k^{B_O}\bigg), \nonumber \\
    \sigma_{u,v}^{TB} &= \sum_i\bigg(m_i\sigma_i^{A_O} + n_i\sigma_i^{B_O}\bigg) + \sum_{ij>0}o_{ij}\sigma_i^{A_O}\sigma_j^{B_O} \nonumber \\
    &\qquad\qquad\qquad\qquad + \sum_{ijk}\bigg(q_{ijk}\sigma_i^{A_I}\sigma_j^{A_O}\sigma_k^{B_O} + r_{ijk}\sigma_i^{A_O}\sigma_j^{B_I}\sigma_k^{B_O}\bigg), \nonumber \\
    \sigma_{u,v}^{ISO} &= \sum_{ij>0}\bigg(s_{ij}\sigma_i^{A_I}\sigma_j^{B_O} + t_{ij}\sigma_i^{A_O}\sigma_j^{B_I}\bigg).
\end{align}
Each of the 15 sets of coefficients appearing here---$a_{ij}, b_{ij}, c_{ijk}$, etc.---have a dependence on $u$ and $v$ that has not been written explicitly to save space. The first operator $\sigma_{u,v}^{TS}$ represents the terms that are only available with \emph{both pre-selection and post-selection}, since
\begin{equation}
    \sum_u \sigma_{u,v}^{TS} = \sum_v\sigma_{u,v}^{TS} = 0,
\end{equation}
as is required by Eqn.(\ref{nopostexp}) and Eqn.~(\ref{nopreexp}). The second operator $\sigma_{u,v}^{TF}$ represents the terms that are available in the time-forward picture, but not in the time-backward picture since they vanish when there is no pre-selection:
\begin{equation}
    \sum_u\sigma_{u,v}^{TF} = 0.
\end{equation}
This is required by Eqn.~(\ref{nopreexp}). Likewise, $\sigma_{u,v}^{TB}$ represents the terms available in the time-backward picture, but not the time-forward picture since they vanish when there is no post-selection:
\begin{equation}
    \sum_v\sigma_{u,v}^{TB} = 0.
\end{equation}
This is required by Eqn.~(\ref{nopostexp}). The final operator $\sigma_{u,v}^{ISO}$ represents terms which may be found in an isolated process, that is, one that does not require pre-selection or post-selection. This may be interpreted as a process that does not rely on any external resources such as density matrices which are not maximally mixed, or non-maximal measurements (these concepts are time-reversal counterparts). These terms are always available even if one marginalizes over $u$ and/or $v$.

Thus, in the time-symmetric process matrix formalism presented here, four classes of processes are arranged naturally in a hierarchical structure, illustrated in Fig.~\ref{hier}. 
\begin{figure}[ht] 
        \includegraphics[width=0.85\columnwidth]{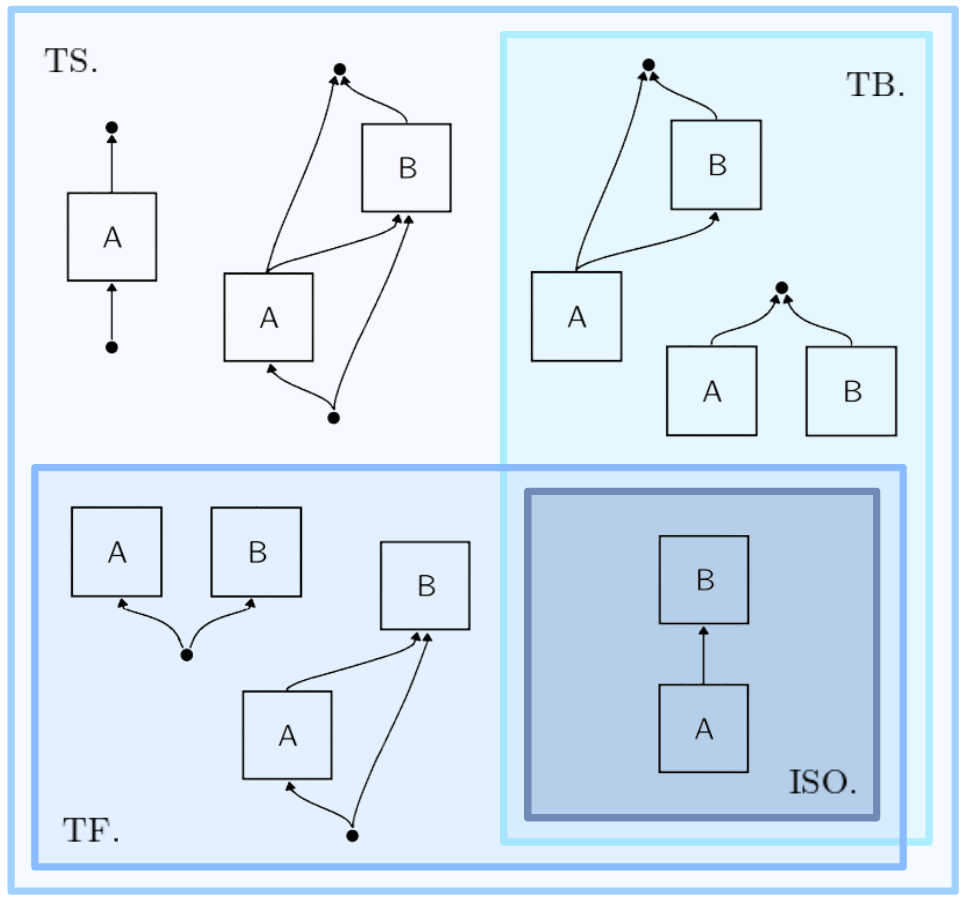} \centering
        \caption{
                \label{hier} 
                {Time-symmetric process matrices are naturally sorted into four classes, as explained in the text. Representative examples of processes with definite causal structure in each class are illustrated schematically. Dots represent pre-selection or post-selection.}           
        }
\end{figure}
Processes with both pre-selection and post-selection form the largest class (TS). These processes are only constrained to give probabilities which are non-negative and not greater than unity. Otherwise, these processes may contain any of the terms listed in Eqn.~(\ref{sigmaops}). One conclusion from this analysis is that the time-symmetric operations, which are more constrained than time-forward operations, result in less constrained process matrices. A consequence of the maximal size of the set of TS process matrices is that arbitrary bipartite probability distributions may be produced given the appropriate process and measurements. This is consistent with the results of Liu and Chiribella \cite{liu2024tsirelson} who studied this same class of processes from a different perspective. A sub-class (TF) is formed by the time-forward processes---those that do not involve post-selection. These processes may contain terms from $\sigma_{u,v}^{TF}$ and $\sigma_{u,v}^{ISO}$ and coincide with the known set of bipartite process matrices studied by Oreshkov, Costa, and Brukner \cite{Oreshkov:2011er}. Another sub-class (TB) is formed by the time-backward processes, which contain terms from $\sigma_{u,v}^{TB}$ and $\sigma_{u,v}^{ISO}$. Naturally, every time-backward process is the time-reversal of a time-forward process. Finally, the smallest sub-class (ISO) is formed by the isolated processes. These lie at the intersection of the time-forward and time-backward processes and contain terms only from $\sigma_{u,v}^{ISO}$. 

As can be seen from Fig.~\ref{hier}, a variety of processes with definite causal structure are possible in the time-symmetric formalism. The simplest examples are obtained by setting all coefficients in Eqn.~(\ref{sigmaops}) to zero apart from one of them. For example, a process like the example in the ISO category in Fig.~\ref{hier} can be obtained by choosing some of the $t_{ij}$ in Eqn.~(\ref{sigmaops}) to be non-zero. The interpretation of this process is a single quantum channel from Alice to Bob with no pre-selection or post-selection. A non-constant $p_0(u,v)$ indicates the presence of ancillary systems in the corresponding process (see, for example, Fig.~\ref{abprocess})---at least for causally separable processes. When Alice and Bob act trivially, there can only be correlation between $u$ and $v$ if there is some other ancillary channel through which signalling can occur. Choosing more complicated combinations of non-zero coefficients in Eqn.~(\ref{sigmaops}) can sometimes result in causally non-separable processes. We will see an example of this in Section~\ref{fbcausineq}. 

The physical interpretation of a causally non-separable process is not immediately clear, although it is commonly thought to be a result of quantum indefiniteness of some kind. This could be due to a spacetime metric in coherent superposition as in Ref.~\cite{moller2024gravitational}, where thin, spherical mass shells are put into a superposition of radii in order to implement the quantum switch \cite{vanderLugt:2022eqz}. Another possible mechanism for causally non-separable processes has been suggested by Oreshkov \cite{oreshkov2019time}. The proposal is based on the notion of time-delocalized quantum systems, which are ``nontrivial subsystems of the tensor products of Hilbert spaces associated with different times.'' In particular, Oreshkov has proposed time-delocalized systems as an explanation for realizations of the quantum switch in earthbound laboratories \cite{Rubino:2017upz, 2015NatCo...6.7913P}, since these experiments supposedly took place in definite near-Minkowski spacetime and cannot be due to spacetime indefiniteness.

\section{Results}
\label{discussion}
 
\subsection{Forward and backward causal inequalities}
\label{fbcausineq}

We have found that the known causal inequalities Eqns.~(\ref{GYNI}, \ref{LGYNI}) remain valid in the time-symmetric formalism. We have also found that the time-symmetric process matrix formalism results in all of the ICOTD processes considered by Chiribella and Liu \cite{liu2024tsirelson}. The ICOTD set of processes is known to be large enough to violate all causal inequalities.

An example of a causal inequality violating-process was studied previously by Oreshkov, Costa, and Brukner \cite{Oreshkov:2011er} where two parties deal only with qubits. This bipartite process is given by the following process matrix:
\begin{equation} \label{exampleW}
    W^{A_IA_OB_IB_O} = \frac{1}{4}\bigg[ \mathds{1}^{A_IA_OB_IB_O} + \frac{1}{\sqrt{2}}\bigg( \sigma_z^{A_O}\sigma_z^{B_I} + \sigma_z^{A_I}\sigma_x^{B_I}\sigma_z^{B_O}\bigg)\bigg],
\end{equation}
where $\sigma_z$ and $\sigma_x$ are Pauli matrices. There are implicit identity operators whenever the action on a particular Hilbert space is not specified within a term. One can check that the process matrix Eqn.~(\ref{exampleW}) satisfies the time-symmetric constraints Eqns.~(\ref{Wcons1}-\ref{Wcons4}) and is a positive semi-definite, Hermitian operator. This process matrix requires pre-selection, but not post-selection. This can be seen by comparing with the expansion in basis operators of Eqn.~(\ref{sigmaops}). We do not write any dependence on the pre-selection variable $u$ because there could be many ways to implement this particular process matrix with different way to depend on $u$.

Consider Alice's operation to consist of a measurement in the $z$ basis with outcome $x$ followed by a preparation of a state in the $z$ basis determined by the income $a$. Alice's operation in the CJ representation takes the form
\begin{align} \label{aliceop}
    M^{A_IA_O}_{a,x} = \frac{1}{4}[\mathds{1}+(-1)^x\sigma_z]^{A_I}\otimes [\mathds{1}+(-1)^a\sigma_z]^{A_O}.
\end{align}
Consider Bob to operate according to two settings. If $\beta=1$, Bob measures in the $z$ basis with outcome $y$ and prepares the maximally mixed state. If $\beta=0$, Bob measures in the $x$ basis with outcome $y$ and prepares a state in the $z$ basis determined by the income $b$, if $y=0$, or determined by $b\oplus 1$, if $y=1$. Overall, this is encoded in the CJ representation of Bob's operation
\begin{align} \label{bobop}
    M^{B_IB_O}_{b,y}[\beta] &= \frac{1}{2}\beta [\mathds{1}+(-1)^y\sigma_z]^{B_I}\otimes \mathds{1}^{B_O} \nonumber \\
    &\qquad + \frac{1}{4}(\beta\oplus 1)[\mathds{1}+(-1)^y\sigma_x]^{B_I}\otimes [\mathds{1}+(-1)^{b+y}\sigma_z]^{B_O}.
\end{align}
The operations of Eqn.~(\ref{aliceop}) and Eqn.~(\ref{bobop}) are completely positive and satisfy double causality. These were adopted from Ref.~\cite{Oreshkov:2011er} with one minor modification. In Ref.~\cite{Oreshkov:2011er}, Bob prepares an arbitrary state when $\beta=1$. To satisfy double causality, and not only forward causality, this state must be the maximally mixed state. This modification does not alter the resulting causal inequality violation. 

With these operations, Alice and Bob have a probability of success 
\begin{equation}
    p_\text{LGYNI} = \frac{2+\sqrt{2}}{4} > \frac{3}{4}
\end{equation}
in the forward LGYNI game that exceeds the bound of $3/4$ for causally separable processes. This is a violation of the forward LGYNI causal inequality Eqn.~(\ref{LGYNI}), certifying the presence of causal non-separability. Meanwhile, Alice and Bob's probability of success in the backward LGYNI game is
\begin{equation}
    \tilde{p}_\text{LGYNI} = \frac{1}{2} < \frac{3}{4},
\end{equation}
which does not violate the backward LGYNI causal inequality Eqn.~(\ref{LGYNIr}).

The time-symmetric process matrix formalism allows us to consider the time-reversal of Eqn.~(\ref{exampleW}),
\begin{equation}
    \tilde{W}^{A_IA_OB_IB_O} = \frac{1}{4}\bigg[ \mathds{1}^{A_IA_OB_IB_O} + \frac{1}{\sqrt{2}}\bigg( \sigma_z^{A_O}\sigma_z^{B_I} + \sigma_z^{A_I}\sigma_x^{A_O}\sigma_z^{B_O}\bigg)\bigg],
\end{equation}
and infer immediately that this too is a valid process matrix. To take the time-reversal of a process matrix, one must swap operators on the following Hilbert spaces: $A_O \leftrightarrow B_I$ and $A_I \leftrightarrow B_O$. This reverses inputs and outputs and also reverses the order of the two parties Alice and Bob. The last term in the reversed process matrix $\tilde{W}$ is type $A_IA_OB_O$ and is not allowed by the time-forward constraints of Ref.~\cite{Oreshkov:2011er}, but is allowed here in the time-symmetric formulation as long as there is post-selection. Using the time-reversals of Eqns.~(\ref{aliceop}, \ref{bobop}) as Alice and Bob's operations, which are guaranteed to be valid operations satisfying the time-symmetric constraints, results in a probability of success
\begin{equation}
    p_\text{LGYNI} = \frac{1}{2} < \frac{3}{4}
\end{equation}
in the forward LGYNI game. This is not a violation of the forward LGYNI inequality Eqn.~(\ref{LGYNI}). Therefore, it is not possible to certify the causal non-separability of $\tilde{W}$ with these operations using forward causal inequalities alone. However, Alice and Bob's probability of success in the backward LGYNI game is
\begin{equation}
    \tilde{p}_\text{LGYNI} = \frac{2+\sqrt{2}}{4} > \frac{3}{4}
\end{equation}
and consistutes a causal inequality violation. This example demonstrates that it is possible to violate a causal inequality and not its time-reversal, and that the backward causal inequalities Eqns.~(\ref{GYNIr}, \ref{LGYNIr}) make it possible to certify more processes as causally non-separable than was previously possible. 

It would be interesting in future research to determine whether there exists a process that can simultaneously violate forward and backward causal inequalities, and whether such a process offers any new computational advantages or physical insights. It would also be interesting to computationally generate an exhaustive list characterizing the facets of the causal polytope \cite{Giarmatzi2019} determined by Eqns.~(\ref{ABforward}-\ref{convex}). There may exist exotic causal inequalities beyond those presented here, perhaps some that cannot naturally be associated to a particular time direction but rather mix the forward and backward directions.

%%%%%%%%%%%%%%%%%%%%%%%%%%%%%%%%%%%%%%%%%%%%%%%%%%%

\subsection{The quantum time flip}
\label{quantumtf}

Chiribella and Liu characterized a class of time-symmetric operations called bidirectional devices in Ref.~\cite{Chiribella:2020yfr}. They demonstrated that it is possible in quantum theory to operate a bidirectional device in a coherent superposition of the forward time direction and backward time direction. Such a process is said to have \emph{indefinite time direction}. It has been claimed that the notion of an indefinite time direction could not be captured by the process matrix formalism \cite{Chiribella:2020yfr}. Here, we find that by manifestly incorporating time-symmetry into the process matrix framework it is indeed possible to describe processes with indefinite time direction. 

An example of a process with indefinite time direction is the \emph{quantum time flip}, illustrated in Fig.~\ref{qtf}. 
\begin{figure}[ht] 
        \includegraphics[width=0.56\columnwidth]{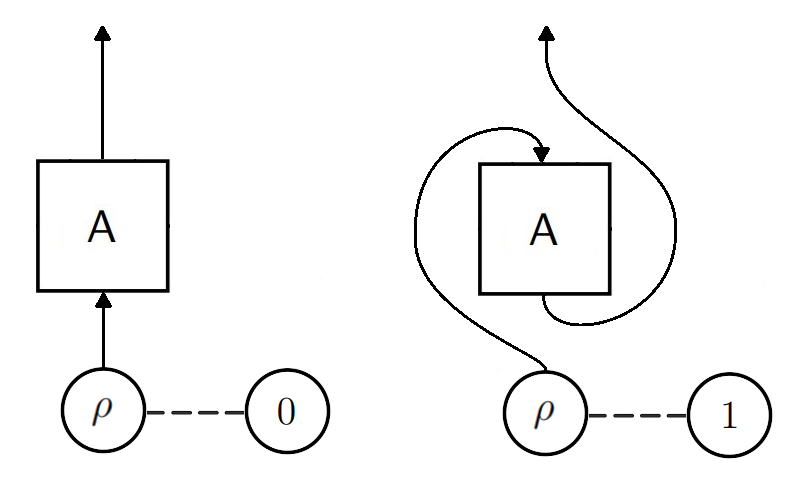} \centering
        \caption{
                \label{qtf} 
                {The quantum time flip can be implemented in the time-symmetric process matrix framework using $W_{QTF}$. A control qubit determines the time direction in which Alice's operation acts on a quantum state with density matrix $\rho$.}           
        }
\end{figure}
Alice performs an operation satisfying the time-symmetric conditions Eqn.~(\ref{cp}) and Eqn.~(\ref{dubcaus}). Meanwhile, Bob has access to a control qubit whose value determines the time direction in which Alice's operation is performed. If the control qubit reads 0, Alice's operation acts in the forward time direction, and if the control qubit reads 1, Alice's operation acts in the backward time direction. This scenario is described by a process matrix
\begin{align} \label{WQTF}
    W_{QTF} &= \bigg[\mathds{1}^{A_IA_O}\otimes \ket{0}{}^{B_I}\bra{0} + \text{SWAP}^{A_IA_O}\otimes \ket{1}{}^{B_I}\bra{1}\bigg] \nonumber \\
    &\qquad\vdot \rho^{A_IB_I}\vdot \bigg[\mathds{1}^{A_IA_O}\otimes \ket{0}{}^{B_I}\bra{0} + \text{SWAP}^{A_IA_O}\otimes \ket{1}{}^{B_I}\bra{1}\bigg]
\end{align}
satisfying the time-symmetric constraints Eqn.~(\ref{positiveW}) and Eqns.~(\ref{Wcons1}-\ref{Wcons4}). The operator $\rho$ is a density matrix (unit trace) that describes the composite system of the control qubit together with the quantum state Alice acts upon. Effectively, what $W_{QTF}$ does if the control qubit reads 0 is send Alice's component of $\rho$ into the input terminal of Alice's operation, and if the control qubit reads 1, it sends it into the output terminal of Alice's operation. This is always possible in the time-symmetric process matrix framework because operations are assumed to be valid in both time directions.

The operator $\text{SWAP}^{A_IA_O}$ is the usual unitary gate defined by
\begin{equation}
    \text{SWAP}^{A_IA_O} = \sum_{i,j=1}^{d_A}\ket{i}{}^{A_I}\bra{j}\otimes\ket{j}{}^{A_O}\bra{i}
\end{equation}
in the computational bases of $\mathcal{H}^{A_I}$ and $\mathcal{H}^{A_O}$. In the case of qubits ($\dim\mathcal{H}^{A_I} = \dim\mathcal{H}^{A_I} = 2$), this can be represented as a matrix
\begin{equation}
    \text{SWAP}^{A_IA_O} = \begin{pmatrix}
1 & 0 & 0 & 0 \\
0 & 0 & 1 & 0 \\
0 & 1 & 0 & 0 \\
0 & 0 & 0 & 1
\end{pmatrix}.
\end{equation}
There is some freedom here in how the computational bases of $\mathcal{H}^{A_I}$ and $\mathcal{H}^{A_O}$ are to be identified relative to one another. This freedom is hypothesized to be equivalent to the choice of the input-output inversion map $\Theta$ described by Chiribella and Liu in Ref.~\cite{Chiribella:2020yfr}. The process matrix $W_{QTF}$ implements the quantum time flip as described in Ref.~\cite{Chiribella:2020yfr}, as can be seen by writing the resulting correlations
\begin{equation}
    p_{QTF}(a,b,x,y) = \Tr_{A_IA_OB_IB_O}[W_{QTF}\vdot(M^{A_IA_O}_{a,x}\otimes M^{B_IB_O}_{b,y})]
\end{equation}
and expanding Alice and Bob's operations in Kraus operators. 

The time-symmetric process matrix framework offers a unified framework for studying processes with indefinite causal structure and indefinite time direction, synthesizing the approaches of Ref.~\cite{Oreshkov:2011er} and Ref.~\cite{Chiribella:2020yfr}. The time-symmetric constraints presented here characterize all processes incorporating indefinite causal structure and/or indefinite time direction consistent with time-symmetric quantum theory. 

%%%%%%%%%%%%%%%%%%%%%%%%%%%%%%%%%%%%%%%%%%%%%%%%%%%%%%%

\subsection{Time-symmetric versus time-forward}
With definite causal structure, time-symmetric (TS) and time-forward (TF) quantum theory are equivalent. This can be seen by the existence of a one-to-one correspondence between operations in the two theories. The fact that every TS operation has a counterpart in TF theory is straightforward to prove: every TS operation satisfies forward causality automatically, so simply conditioning on the income variable results in a collection of valid TF operations, illustrated in Fig.~\ref{TSTF}. The value of the income in the TS operation can be interpreted as a setting in the resulting collection of TF operations. Therefore, a TS operation corresponds to a particular collection of TF operations which all together satisfy backwards causality.
\begin{figure}[ht] 
        \includegraphics[width=0.58\columnwidth]{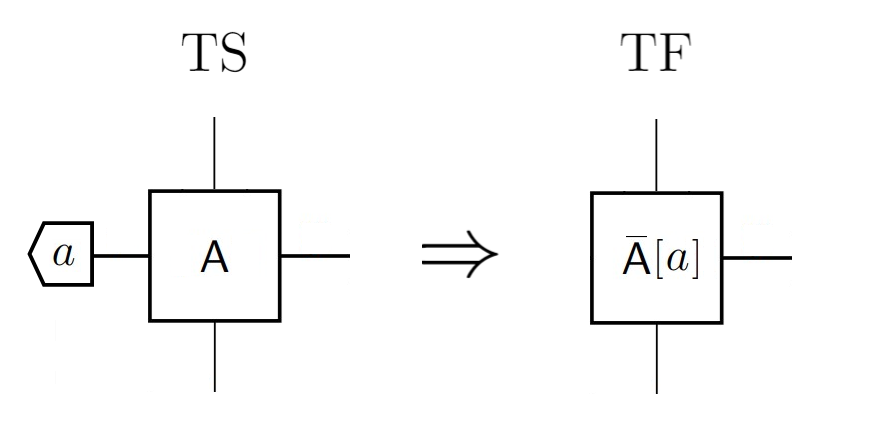} \centering
        \caption{
                \label{TSTF} 
                {Every TS operation corresponds to a collection of valid TF operations, obtained by conditioning on the income. The pentagon-shaped box represents conditioning on the variable $a$ using Bayes' rule.}           
        }
\end{figure}

The other direction of the proof is slightly more involved. It can be shown that, given a TF operation $\bar{M}^{A_IA_O}_a$, a TS operation can be constructed:
\begin{align}
    M^{A_IA_O}_{a,x} &= \frac{1}{N_a}\bar{M}^{A_IA_O}_x \delta_{a,1} \nonumber \\
    &\quad\quad + \frac{1}{d_A N_x (N_a-1)}\sum_{a' > 1}\mathds{1}^{A_I}\otimes\bigg(N_a\mathds{1}^{A_O} - \Tr_{A_I}\bar{M}^{A_IA_O}_x \bigg)\delta_{a,a'}.\label{TScons}
\end{align}
The TS operation is constructed so that pre-selecting on the income $a=1$ gives the TF operation, that is,
\begin{equation}
    \frac{1}{p(a=1)} M^{A_IA_O}_{1,x} = N_a M^{A_IA_O}_{1,x} = \bar{M}^{A_IA_O}_x.
\end{equation}
The TS operation $M^{A_IA_O}_{a,x}$ constructed in this way is guaranteed to be positive so long as the number of incomes satisfies $N_a\geq d_A N_x$, and it is guaranteed to satisfy double causality so long as $\bar{M}^{A_IA_O}_x$ satisfies forward causality. Therefore, we are free to interpret any TF operation as a TS operation conditioned on the income $a=1$.

Any circuit composed of TS operations is equivalent to a collection of circuits composed of TF operations obtained by pre-selecting on the possible combinations of income values. On the other hand, any circuit composed of TF operations is equivalent to a circuit composed of TS operations constructed by Eqn.~(\ref{TScons}) all pre-selected on the income value $a=1$. Thus, the same probability distributions can be obtained in both theories, and TF and TS quantum theory with definite causal structure are equivalent.

This equivalence fails to remain true when the assumption of definite causal structure is relaxed. As discussed in the text following Eqn.~(\ref{genproc}), the TS process matrices form a larger subspace in the space of linear operators $\mathcal{L}(\mathcal{H}^{A_I}\otimes\mathcal{H}^{A_O}\otimes\mathcal{H}^{B_I}\otimes\mathcal{H}^{B_O})$ than the TF process matrices. It is possible that one could construct an alternate formulation of TF quantum theory with indefinite causal structure that allows for post-selection in the process matrix. If only the process matrices averaged over the possible post-selection values are required to satisfy the usual TF process matrix constraints, it is possible an equivalence could be established between TS and TF theory. However, in their current forms, TS and TF quantum theory with indefinite causal structure are distinct. Given the same set of operations (according to the correspondence established in this section), there is a larger space of allowed probability distributions in TS theory than TF theory. 

To demonstrate this inequivalence concretely, consider the following simple scenario. To avoid developing the formalism for single-party process matrices, we present an example with two parties. However, the inequivalence between TS and TF process matrix theory manifests even with one party. Alice and Bob have inputs and outputs which are qubits, $d_A=d_B = 2$, and their operations are given as follows:
\begin{align}
    M^{A_IA_O}_{1,1} &= \frac{1}{4}(\mathds{1}^{A_IA_O} + c\sigma_z^{A_O}), \nonumber \\
    M^{A_IA_O}_{2,1} &= \frac{1}{4}(\mathds{1}^{A_IA_O} - c\sigma_z^{A_O}), \nonumber \\
    M^{B_IB_O}_{1,1} &= \frac{1}{2}\mathds{1}^{B_IB_O},
\end{align}
where $0\leq c\leq 1$ is a free parameter and $\sigma_z$ is the Pauli $z$ operator. Alice's TS operation corresponds to a TF operation
\begin{equation}
    \bar{M}^{A_IA_O}_{1} = \frac{1}{2}(\mathds{1}^{A_IA_O} + c\sigma_z^{A_O})
\end{equation}
by the construction of Eqn.~(\ref{TScons}) with $N_a=2$. Bob's operation is both a valid TS and TF operation. The process matrix is
\begin{equation}
    W = \frac{1}{4}(\mathds{1}^{A_IA_OB_IB_O}+w\sigma_z^{A_O})
\end{equation}
where $0< w\leq 1$ is a constant. $W$ is a valid process matrix only in the TS or TB theories, and is only permitted if post-selection is present. Then we can calculate
\begin{equation}
    p(1,1,1,1) \equiv \Tr[W\vdot (M^{A_IA_O}_{1,1}\otimes M^{B_IB_O}_{1,1})] = \frac{1}{2}(1+cw).
\end{equation}

Thus it is possible to have non-trivial dependence on the parameter $c$ in TS theory, however, this is impossible in TF theory. This is due to the fact that the term resulting from $c\sigma_z^{A_O}$ in Alice's operation is orthogonal to every possible term in a TF process matrix. Given this set of operations, which are valid in both theories, it is possible to demonstrate concretely the inequivalence between TS and TF process matrix theory in their current forms.

\section{Conclusion}
\label{conclude}

In this paper, we have developed a formalism for causal inequalities and process matrices in the setting of time-symmetric operational theory. We use a modified type of operation which includes \emph{incomes} as well as outcomes and satisfies a time-symmetric set of constraints. By studying processes with definite causal structure, we arrived at twice as many causal inequalities as those previously known for two parties. Each causal inequality from the time-forward setting (GYNI and LGYNI) has a time-reversed counterpart in our formalism, Eqns.~(\ref{GYNIr}, \ref{LGYNIr}). We demonstrated in Section~\ref{fbcausineq} that this larger set of causal inequalities offers new opportunities to certify the causal non-separability of certain processes which violate one of the backward inequalities. It remains to be shown whether these four causal inequalities form an exhaustive list for two parties, or whether there may exist additional inequalities. 

By requiring non-negative and normalized probabilities, while allowing for both pre-selection and post-selection, we derived the largest possible set of process matrices for two parties. This maximal set corresponds to the ICOTD processes of Chiribella and Liu \cite{liu2024tsirelson}, which were shown to maximally violate every causal inequality. In our formalism, process matrices without post-selection satisfy an additional set of constraints to prevent certain types of backwards signalling, see Eqns.~(\ref{vcons1}-\ref{vcons3}). The process matrices satisfying these constraints are precisely those of the TF setting \cite{Oreshkov:2011er}. This demonstrates that the primary distinction between process matrices in the TS and TF settings is the possibility for post-selection.

With post-selection permitted, the set of process matrices in the TS formalism (or ICOTD processes) contain certain processes with \emph{indefinite time direction}. One example of such a process is the quantum time flip, where the time direction of Alice's operation is determined by an ancillary control qubit, which may be prepared in a coherent superposition. The process matrix for the quantum time flip is written explicitly in Section~\ref{quantumtf}. A question for future research is to determine whether processes with indefinite time direction violate any causal inequalites that cannot be violated by a process with definite time direction. 

It would be interesting to find a process (a process matrix together with a set of operations) that violates both the forward and backward versions of a causal inequality. If there exists a normalized probability distribution which violates both the forward and backward versions of an inequality, then by the results of Chiribella and Liu \cite{liu2024tsirelson} we are guaranteed to be able to find an ICOTD process which produces this probability distribution. We expect that if such a process exists, it will involve indefinite time direction. This process will therefore exist in the full set of time-symmetric processes, but not the sub-class of time-forward processes (see Fig.~\ref{hier}).

\section*{Acknowledgments}
We would like to thank \v{C}aslav Brukner, Giulio Chiribella, Djordje Minic, Ognyan Oreshkov, and Aldo Riello for helpful discussions. Research at Perimeter Institute is supported in part by the Government of Canada through the Department of Innovation, Science and Economic Development Canada and by the Province of Ontario through the Ministry of Colleges and Universities.

\appendix

\section{Fundamentals of quantum measurement theory}
\label{meas}

This Appendix offers a brief introduction to quantum measurement theory for the unfamiliar reader. For a more comprehensive introduction, we refer the reader to the standard textbook on quantum information and quantum computation by Nielsen and Chuang \cite{nielsen2001quantum}. In the main text, we will rely primarily on the notion of an operation in the Choi–Jamiołkowski (CJ) representation. Here, the more commonly known time-forward formulation is presented, while in the main text we discuss the adaptation to the time-symmetric setting.

A \emph{quantum channel} takes a quantum system, represented by a density matrix in the finite-dimensional case, and maps it to a new quantum system. Mathematically, this corresponds to a completely positive, trace-preserving (CPTP) map. If the input and output quantum systems are represented by density matrices $\rho^I \in \mathcal{L}(\mathcal{H}^{I}), \rho^O \in \mathcal{L}(\mathcal{H}^{O})$ acting on their respective Hilbert spaces, a quantum channel is a CPTP map
\begin{align}
    \mathcal{C}: \mathcal{L}(\mathcal{H}^{I}) &\rightarrow \mathcal{L}(\mathcal{H}^{O}) \nonumber \\
    \rho^I &\mapsto \rho^O.
\end{align}
between the sets of Hermitian, linear operators on two Hilbert spaces. Complete positivity is the requirement that positive operators are mapped to positive operators, even in the presence of ancillary systems. This requirement is equivalent to the positivity of the induced map
\begin{equation}
    \mathds{1}^k \otimes \mathcal{C}: \mathcal{L}(\mathcal{H}^{A}\otimes\mathcal{H}^{I}) \rightarrow \mathcal{L}(\mathcal{H}^{A}\otimes\mathcal{H}^{O}) 
\end{equation}
for all $k$, where $\mathcal{H}^A$ is the Hilbert space of an auxiliary quantum system and $\dim{\mathcal{H}^A} = k$ is its dimension. Trace-preservation is the requirement that 
\begin{equation}
    \Tr_O [\mathcal{C}(\rho^I)] = 1
\end{equation}
for all normalized density matrices $\rho^I \in \mathcal{H}^I$. Complete positivity and trace-preservation guarantee that density matrices are mapped to density matrices under the action of the quantum channel.

We may loosen the requirement of trace-preservation and consider maps which are trace non-increasing, that is,
\begin{equation}
    0 \leq \Tr_O[\mathcal{C}(\rho^I)] \leq 1
\end{equation}
for all unit-normalized density matrices $\rho^I \in \mathcal{H}^I$. Trace non-increasing channels offer a natural description of quantum measurements. We may label a collection of trace non-increasing channels with subscripts, $\mathcal{C}_x$, where $x$ is the \emph{outcome} of a given measurement. Then the probability associated to the measurement outcome $x$ is
\begin{equation}
    p(x) \coloneq \Tr_O[\mathcal{C}_x(\rho^I)].
\end{equation}
These probabilities are guaranteed to be non-negative so long as $\rho^I$ is a valid density matrix. To guarantee normalized probabilities, that is,
\begin{equation}
    \sum_x p(x) = \sum_x \Tr_O[\mathcal{C}_x(\rho^I)] = 1,
\end{equation}
we require that the channel $\sum_x\mathcal{C}_x$, defined linearly, resulting from \emph{marginalization} over the outcome $x$ is a CPTP channel. We refer to the collection $\{\mathcal{C}_x\}_x$ as an \emph{operation} to indicate that it represents a set of generalized measurements, that is, trace non-increasing channels with associated outcomes. This is not to be confused with the more familiar \emph{projective measurements}---operations with the additional requirement that the constituent channels act projectively on quantum states. We will see in Section~\ref{intro} of the main text that the definition of an operation is modified in the time-symmetric setting to include \emph{incomes} in addition to outcomes. This is explained in detail in the main text.

A useful representation of an operation $\{\mathcal{C}_x\}_x$ is given by the \emph{Choi–Jamiołkowski (CJ) isomorphism}. To construct the isomorphism, consider the (un-normalized) maximally entangled state
\begin{equation}
    \ket{\Phi^+} = \sum_{i=0}^{\dim (\mathcal{H}^I)-1}\ket{i}\otimes\ket{i} \in \mathcal{H}^I\otimes\mathcal{H}^I.
\end{equation}
on two copies of the input Hilbert space $\mathcal{H}^I$ where $\{\ket{i} \in \mathcal{H}^I\}$ forms an orthonormal basis. Then we can construct
\begin{equation}
    M^{IO}_x := \mathds{1}^I\otimes\mathcal{C}_x\bigg(\ket{\Phi^+}\bra{\Phi^+}\bigg) \in \mathcal{L}(\mathcal{H}^I\otimes\mathcal{H}^O).
\end{equation}
This is the CJ representation of the operation $\{\mathcal{C}_x\}_x$. The complete positivity of $\mathcal{C}_x$ is equivalent to the positive semi-definiteness of $M^{IO}_x$, and trace-preservation becomes the condition
\begin{equation}
    \sum_x \Tr_O[M^{IO}_x] = \mathds{1}^I.
\end{equation}
We call this condition \emph{forward causality}. Again, in the time-symmetric scenario the definition of an operation is modified, and we are instead left with two constraints which we call collectively \emph{double causality}, see Eqn.~(\ref{dubcaus}). The CJ representation of time-symmetric operations is the main ingredient in our discussion of quantum theory throughout this paper.

\section{Derivation of time-symmetric process matrix constraints}
\label{derW}

\begin{theorem} \label{wconstraints}
The requirement of normalized probabilities as in Eqn.~(\ref{sumtoone}) is satisfied iff the process matrix $W_{u,v}^{A_IA_OB_IB_O}$ satisfies the following four constraints:
\begin{align}
    \sum_{u,v}\Tr_{A_IA_OB_IB_O}[W_{u,v}] &= d_Ad_B,  \nonumber\\
    \sum_{u,v}{}_{B_IB_O[1-A_I][1-A_O]}W_{u,v}&=0, \nonumber \\
    \sum_{u,v}{}_{A_IA_O[1-B_I][1-B_O]}W_{u,v}&=0,  \nonumber\\
    \sum_{u,v}{}_{[1-A_I][1-A_O][1-B_I][1-B_O]}W_{u,v}&=0. \nonumber
\end{align}
\end{theorem}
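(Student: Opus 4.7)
The plan is to prove the equivalence by translating Eqn.~(\ref{sumtoone}) into a polynomial identity in the free parameters of the two averaged operations, using the orthonormal Hermitian basis $\{\sigma_\mu^X\}$ introduced in the main text. First I would expand
\begin{equation*}
    W_{u,v}^{A_IA_OB_IB_O} = \sum_{\mu\nu\alpha\beta}w_{\mu\nu\alpha\beta}(u,v)\,\sigma_\mu^{A_I}\sigma_\nu^{A_O}\sigma_\alpha^{B_I}\sigma_\beta^{B_O},
\end{equation*}
and write each physical averaged operation in the form $M^{X_IX_O} = \tfrac{1}{d_X}\mathbb{1}^{X_IX_O} + \tfrac{1}{d_X}\sum_{ij>0}\mathcal{X}_{ij}\sigma_i^{X_I}\sigma_j^{X_O}$, as forced by complete positivity together with the double causality constraint (\ref{dubcaus}). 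Substituting these expansions into Eqn.~(\ref{sumtoone}) and applying $\Tr[\sigma_\mu^X\sigma_\nu^X] = d_X\delta_{\mu\nu}$ turns the left-hand side into a polynomial of total degree two in the free parameters $\mathcal{A}_{ij}$ and $\mathcal{B}_{kl}$ (with $i,j,k,l>0$), whose constant, $\mathcal{A}$-linear, $\mathcal{B}$-linear, and bilinear coefficients are proportional to $\sum_{u,v}w_{0000}$, $\sum_{u,v}w_{ij00}$, $\sum_{u,v}w_{00kl}$, and $\sum_{u,v}w_{ijkl}$ respectively --- this is the identity displayed in the text preceding Eqns.~(\ref{Wcons1}-\ref{Wcons4}).

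For the forward direction, I would show that each monomial coefficient must independently match between the two sides. The key observation is that for any fixed pair $(i,j)$ with $i,j>0$, the averaged operation $M^{A_IA_O}_\epsilon = \tfrac{1}{d_A}(\mathbb{1}^{A_IA_O} + \epsilon\sigma_i^{A_I}\sigma_j^{A_O})$ is positive semi-definite for all sufficiently small real $\epsilon$, and it automatically satisfies double causality since $\sigma_i$ and $\sigma_j$ are traceless for $i,j>0$. Pairing this with the trivial averaged operation on Bob's side, the normalization condition becomes $1 = 1 + \tfrac{d_A}{d_B}\epsilon\sum_{u,v}w_{ij00}$, which forces $\sum_{u,v}w_{ij00}(u,v)=0$. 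The same argument applied to Bob's side gives $\sum_{u,v}w_{00kl}=0$, and combining two-sided perturbations $\mathcal{A}=\epsilon e_{ij}$, $\mathcal{B}=\eta e_{kl}$ and subtracting the already-established linear contributions extracts $\sum_{u,v}w_{ijkl}=0$. Finally the constant term yields $\sum_{u,v}w_{0000}=1/(d_Ad_B)$. The converse is immediate: if all four coefficient identities hold, the polynomial collapses to the constant $1$ regardless of Alice and Bob's operations.

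It remains to translate the coefficient identities into the operator form of the theorem. Using the basis expansion, ${}_X$ retains precisely the basis terms whose index on $X$ equals $0$ while ${}_{[1-X]}$ retains those whose index is strictly positive, so ${}_{B_IB_O[1-A_I][1-A_O]}W_{u,v}$ equals $\sum_{i,j>0}w_{ij00}(u,v)\sigma_i^{A_I}\sigma_j^{A_O}$ times the appropriate identity factor, and analogously for ${}_{A_IA_O[1-B_I][1-B_O]}W_{u,v}$ and ${}_{[1-A_I][1-A_O][1-B_I][1-B_O]}W_{u,v}$. The full trace condition is just $\sum_{u,v}w_{0000}=1/(d_Ad_B)$ rescaled by $\Tr[\mathbb{1}^{A_IA_OB_IB_O}] = d_A^2d_B^2$. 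The main obstacle in this program is not any single calculation but the care required to verify that enough physical perturbations exist to isolate each monomial coefficient independently; once the small-$\epsilon$ argument above is in place, the remaining work is routine bookkeeping in the Hermitian basis.
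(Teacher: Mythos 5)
Your proposal is correct, but it proves the theorem by a different technical route than the paper's Appendix~\ref{derW}. The paper works coordinate-free: it parametrizes physical averaged operations as ${}_{[1-A_I][1-A_O]}x + \mathbb{1}^{A_IA_O}/d_A$ for arbitrary Hermitian $x$, moves the traceless-part projection onto $W$ using the self-adjointness of ${}_{X}(\cdot)$ and ${}_{[1-X]}(\cdot)$ under the Hilbert--Schmidt inner product, and then invokes the fact that an operator orthogonal to every Hermitian operator vanishes; sufficiency follows by splitting a general physical operation into its identity and traceless parts. You instead expand everything in the orthogonal Hermitian basis $\{\sigma_\mu^X\}$ (as in Section~\ref{ts process}) and match monomial coefficients in the free parameters $\mathcal{A}_{ij},\mathcal{B}_{kl}$, which is a coordinate version of the same underlying idea, so the two proofs agree on structure even though the bookkeeping differs. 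What your version buys is an explicit treatment of a point the paper glosses over: Eqn.~(\ref{sumtoone}) quantifies only over \emph{physical} (hence positive semi-definite) averaged operations, whereas the paper's $\mathcal{X},\mathcal{Y}$ built from arbitrary Hermitian $x,y$ need not be positive; your small-$\epsilon$ perturbations $\tfrac{1}{d_A}(\mathbb{1}+\epsilon\,\sigma_i^{A_I}\sigma_j^{A_O})$ stay within the physical set and then linearity in $\epsilon,\eta$ does the rest, which closes that gap cleanly. What the paper's route buys is compactness and independence from any choice of basis, and its self-adjointness lemma is reused in Appendix~\ref{nopostpre}. Two cosmetic remarks: the general form of the averaged operation is forced by double causality alone (complete positivity is not needed for that step), and the exact numerical prefactor multiplying $\sum_{u,v}w_{ij00}$ is immaterial---all that matters is that it is nonzero, so the conclusion $\sum_{u,v}w_{ij00}=0$ is unaffected.
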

\begin{proof}
In this proof, we follow the technique of Araújo et al. in Appendix B of Ref.~\cite{araujo2015witnessing} to derive the necessary and sufficient conditions for a valid process matrix. 

First, we prove that the conditions stated in the theorem are necessary. Take two Hermitian operators
\begin{equation}
    x\in \mathcal{L}(\mathcal{H}^{A_I}\otimes\mathcal{H}^{A_O}),\quad  y\in \mathcal{L}(\mathcal{H}^{B_I}\otimes\mathcal{H}^{B_O})
\end{equation}
acting on the Hilbert spaces of Alice and Bob, respectively. Then we can construct operators
\begin{align}
    \mathcal{X}^{A_IA_O} = {}_{[1-A_I][1-A_O]}x + \mathbb{1}^{A_IA_O}/d_A, \quad \mathcal{Y}^{B_IB_O} = {}_{[1-B_I][1-B_O]}y + \mathbb{1}^{B_IB_O}/d_B
\end{align}
that satisfy the normalization constraints Eqn.~(\ref{dubcaus}) for any choice of $x$ and $y$. Requiring normalized probabilities when Alice and Bob use $\mathcal{X}$ and $\mathcal{Y}$ as their quantum operations amounts to requiring
\begin{align}
    \sum_{u,v}\Tr_{A_IA_OB_IB_O}\bigg[W_{u,v}^{A_IA_OB_IB_O}\vdot &\bigg({}_{[1-A_I][1-A_O]}x + \mathbb{1}^{A_IA_O}/d_A\bigg) \nonumber \\
    &\qquad\otimes \bigg({}_{[1-B_I][1-B_O]}y + \mathbb{1}^{B_IB_O}/d_B\bigg) \bigg] = 1
\end{align}
for all Hermitian $x$ and $y$. Then we can consider four possibilities:

\vspace{0.1in}
\noindent
($x=y=0$) In this case, the constraint reduces to 
\begin{equation}
    \sum_{u,v}\Tr_{A_IA_OB_IB_O}[W_{u,v}] = d_Ad_B.
\end{equation}

\vspace{0.1in}
\noindent
($x\neq 0, y=0$) The term from the $x=y=0$ remains and equals one, while the additional term for $x\neq 0$ must equal zero to preserve the normalization. The normalization constraint then reduces to
\begin{equation}
    \sum_{u,v}\Tr_{A_IA_OB_IB_O}[{}_{[1-A_I][1-A_O]}x\cdot W_{u,v}^{A_IA_OB_IB_O}]=0.
\end{equation}
Now we make use of the following fact. The pre-subscript ${}_{X}M$ acting on a Hermitian operator $M\in \mathcal{L}(\mathcal{H}_X\otimes\mathcal{H}_Y)$ is self-adjoint in the Hilbert-Schmidt inner product, that is,
\begin{equation}
    \Tr_{XY}[{}_{X}M_1^{XY}\cdot M_2^{XY}] = \Tr_{XY}[M_1^{XY}\cdot{}_{X}{M_2}^{XY}].
\end{equation}
This can be shown simply,
\begin{align}
    \Tr_{XY}[(\mathbb{1}_X\otimes\Tr_X[M_1^{XY}])\cdot M_2^{XY}] &= \Tr_Y[\Tr_X[M_1^{XY}]\cdot\Tr_X[M_2^{XY}]] \nonumber \\
    &= \Tr_{XY}[M_1^{XY}\cdot (\mathbb{1}_X\otimes\Tr_X[M_2^{XY}])].
\end{align}
Following the same logic, the operation ${}_{[1-X]}M$ acting on $M$ is self-adjoint in the Hilbert-Schmidt inner product. This requires that
\begin{equation}
    \sum_{u,v}\Tr_{A_IA_OB_IB_O}[x\cdot {}_{[1-A_I][1-A_O]}W_{u,v}^{A_IA_OB_IB_O}]=0
\end{equation}
for all Hermitian $x$. The only Hermitian operator that is orthogonal to all other Hermitian operators is the zero operator. Therefore,
\begin{equation}
    \sum_{u,v}\Tr_{B_IB_O}[{}_{[1-A_I][1-A_O]}W_{u,v}]=0.
\end{equation}
This is equivalent to the second constraint in the theorem.

\vspace{0.1in}
\noindent
($x=0, y\neq 0$) Following the same steps but swapping Alice and Bob's roles gives the third constraint:
\begin{equation}
    \sum_{u,v}\Tr_{A_IA_O}[{}_{[1-B_I][1-B_O]}W_{u,v}]=0.
\end{equation}

\vspace{0.1in}
\noindent
($x,y \neq 0$) In this case the normalization constraint reduces to
\begin{equation}
    \sum_{u,v}\Tr_{A_IA_OB_IB_O}[({}_{[1-A_I][1-A_O]}x\otimes{}_{[1-B_I][1-B_O]}y)\cdot W_{u,v}^{A_IA_OB_IB_O}]=0.
\end{equation}
We can use the fact that the action of ${}_{[1-X]}\vdot$ is self-adjoint to have each of these act on $W$
\begin{equation}
    \sum_{u,v}\Tr_{A_IA_OB_IB_O}[(x\otimes y)\cdot {}_{[1-A_I][1-A_O]}{}_{[1-B_I][1-B_O]}W_{u,v}^{A_IA_OB_IB_O}]=0.
\end{equation}
Since this is true for all Hermitian $x$ and $y$, we find that
\begin{equation}
    \sum_{u,v}{}_{[1-A_I][1-A_O]}{}_{[1-B_I][1-B_O]}W_{u,v}^{A_IA_OB_IB_O}=0.
\end{equation}
In conclusion, we have shown that the four constraints in the theorem are a necessary consequence of the four cases considered here.

Finally, we show that the four conditions in the theorem are sufficient for normalized probabilities. We rewrite Alice's averaged operation in the form
\begin{equation}
    M^{A_IA_O} = {}_{[1-A_I][1-A_O]}M^{A_IA_O} + {}_{A_I}M^{A_IA_O} + {}_{A_O}M^{A_IA_O} - {}_{A_IA_O}M^{A_IA_O}
\end{equation}
by trivially adding and subtracting the partial traces of $M^{A_IA_O}$. Alice's operation is required to satisfy normalization constraints as in Eqn.~(\ref{dubcaus}). We can use this to write
\begin{equation}
    M^{A_IA_O} = {}_{[1-A_I][1-A_O]}M^{A_IA_O} + \mathbb{1}^{A_IA_O}/d_A.
\end{equation}
We can write Bob's operation in the analogous way. Inserting these expressions into Eqn.~(\ref{sumtoone}) gives
\begin{align}                       
  \sum_{u,v}\Tr_{A_IA_OB_IB_O}\bigg[W_{u,v}^{A_IA_OB_IB_O} \vdot &\bigg({}_{[1-A_I][1-A_O]}M^{A_IA_O} + \mathbb{1}^{A_IA_O}/d_A\bigg) \nonumber \\
  &\quad \otimes \bigg({}_{[1-B_I][1-B_O]}M^{B_IB_O} + \mathbb{1}^{B_IB_O}/d_B\bigg)\bigg] = 1.
\end{align}
Expanding the tensor product results in four terms. The term
\begin{equation}
    \sum_{u,v}\Tr_{A_IA_OB_IB_O}[\mathbb{1}^{A_IA_O}/d_A\otimes \mathbb{1}^{B_IB_O}/d_B\cdot W_{u,v}^{A_IA_OB_IB_O}]
\end{equation}
is equal to one if the first constraint from the theorem is met by the process matrix $W_{u,v}$. The remaining three terms vanish due to the remaining three constraints in the theorem. Therefore, the four conditions are sufficient to guarantee normalized probabilities.

\end{proof}

\section{Process matrices without pre-selection and post-selection}
\label{nopostpre}

One of the main lessons from our study of processes with definite causal order in Section~\ref{causineq} is that signalling forward in time is not possible without pre-selection, and signalling backward in time is not possible without post-selection. This guiding principle imposes further constraints on process matrices that do not feature pre-selection or post-selection. 

\begin{figure}[htbp] 
        \makebox[\textwidth][c]{        \includegraphics[width=1.2\textwidth]{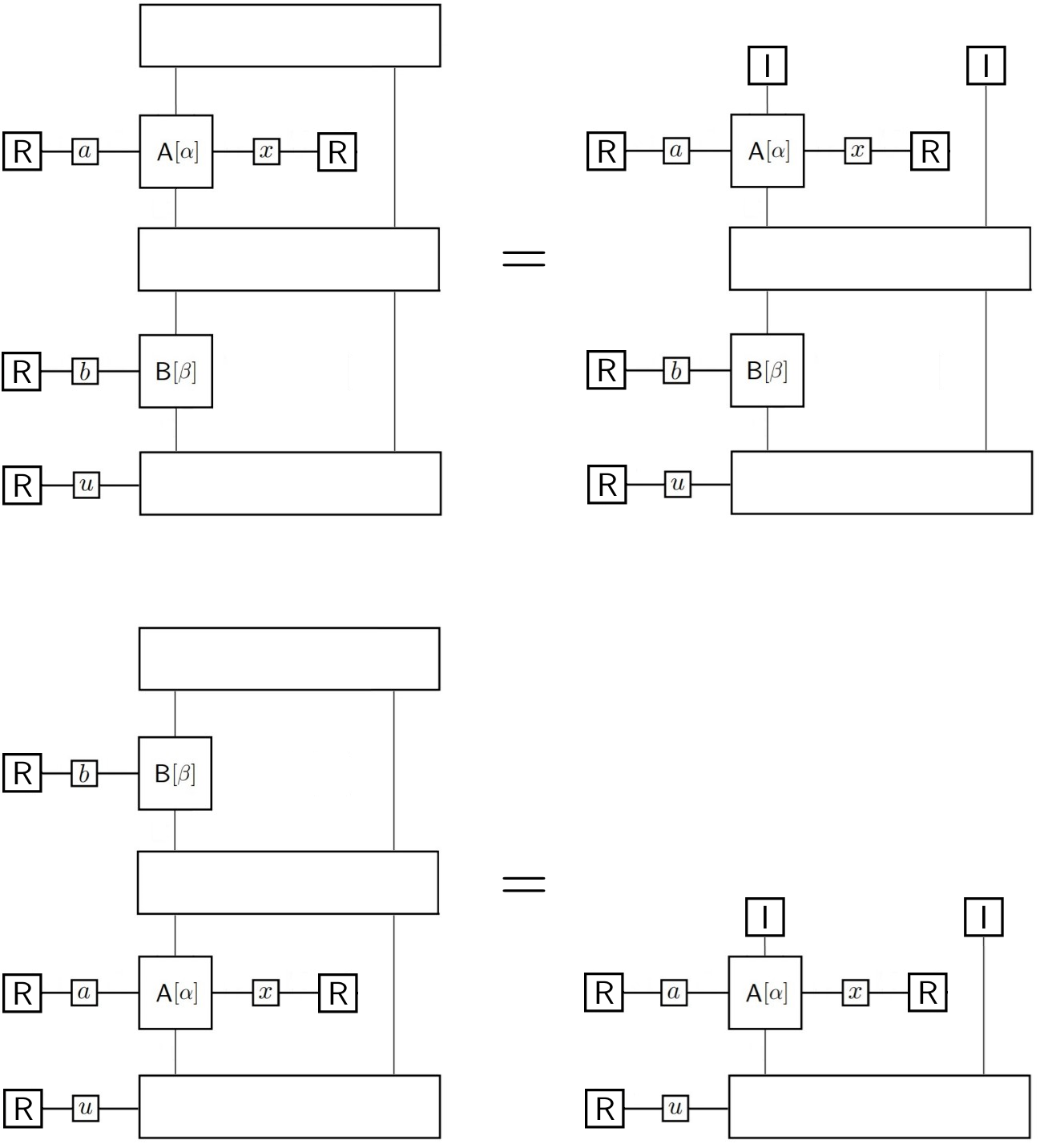}}
        \caption{
                \label{nops} 
                {With $v$ and $y$ marginalized, Bob cannot signal backward in time to Alice. In either definite causal order, $A\preceq B$ or $B\preceq A$, applying the double causality rules shows that Alice's output must be ignored.}           
        }
\end{figure}
Post-selection can be accomplished either by conditioning on a value for the post-selection variable $v$ in the process matrix, or by conditioning on Alice or Bob's outcome $x$ or $y$. See Fig.~\ref{abprocess} for the definitions of these variables. Consider first the case when $v$ and $y$ are marginalized. For causally separable processes, it is clear that Bob cannot signal backward in time to Alice. In diagrammatic notation, this fact is demonstrated in Fig.~\ref{nops}. By applying the double causality rules, it can be seen that Alice's output is always ignored in either definite causal order, $A\preceq B$ or $B\preceq A$. We assert that even in a causally non-separable process, Bob cannot signal backward in time to Alice with this marginalization, so Alice's outcome must be ignored. In quantum theory, this can be written
\begin{align}
    \sum_{v,y}&\Tr_{A_IA_OB_IB_O}[W^{A_IA_OB_IB_O}_{u,v}\vdot(M_{a,x}^{A_IA_O}\otimes M_{b,y}^{B_IB_O})] \nonumber \\
    &= \sum_{v,y}\Tr_{A_IA_OB_IB_O}[{}_{A_O}W^{A_IA_OB_IB_O}_{u,v}\vdot(M_{a,x}^{A_IA_O}\otimes M_{b,y}^{B_IB_O})].
\end{align}
This must hold for all operations $M_{a,x}^{A_IA_O}$ and $M_{b,y}^{B_IB_O}$. Therefore we find that 
\begin{equation}
    \Tr_{A_IA_OB_IB_O}\bigg[\sum_{v}{}_{[1-A_O]}W^{A_IA_OB_IB_O}_{u,v}\vdot\bigg(M_{a,x}^{A_IA_O}\otimes \sum_y M_{b,y}^{B_IB_O}\bigg)\bigg] = 0.
\end{equation}
This is a statement of the orthogonality between $\sum_{v}{}_{[1-A_O]}W^{A_IA_OB_IB_O}_{u,v}$ and the subspace generated by operators of the form $M_{a,x}^{A_IA_O}\otimes \sum_y M_{b,y}^{B_IB_O}$ under the Hilbert-Schmidt inner product. Keeping in mind that Alice and Bob's operations satisfy the double causality rules Eqn.~(\ref{dubcaus}), this results in the constraints Eqns.~(\ref{vcons2}, \ref{vcons3}). Marginalizing over $v$ and $x$ instead and repeating this analysis results in the remaining constraint Eqn.~(\ref{vcons1}).

A similar argument can be made for processes without pre-selection. Consider the case when $u$ and $a$ are marginalized. Then Alice cannot signal forward in time to Bob, and Bob's input must be ignored. This translates into the orthogonality equation:
\begin{equation}
    \Tr_{A_IA_OB_IB_O}\bigg[\sum_{u}{}_{[1-B_I]}W^{A_IA_OB_IB_O}_{u,v}\vdot\bigg(\sum_aM_{a,x}^{A_IA_O}\otimes M_{b,y}^{B_IB_O}\bigg)\bigg] = 0
\end{equation}
for all operations $M_{a,x}^{A_IA_O}$ and $M_{b,y}^{B_IB_O}$. Then a process matrix without pre-selection must satisfy Eqns.~(\ref{ucons2}, \ref{ucons3}). Repeating the argument by marginalizing $u$ and $b$ results in the final constraint Eqn.~(\ref{ucons1}).

\bibliographystyle{unsrt}
\bibliography{output}

\end{document}